\newtheorem{claim}{Claim}
\begin{document}
%
\title{Energy Efficient Barring Factor Enabled Extended Access Barring for IoT Devices in LTE-Advanced}
%
%
%

\author{Prashant K.Wali and~Debabrata Das,~\IEEEmembership{Senior Member,~IEEE}
\thanks{Manuscript received October 26, 2016. This work was supported by a project funded by DeitY, Govt. of India.}
\thanks{Prashant Wali is with International Institute of Information Technology-Bangalore, Bangalore 560100, India (e-mail: iwalihere@gmail.com)}
\thanks{Debabrata Das is with International Institute of Information Technology-Bangalore, Bangalore 560100, India (email: ddas@iiitb.ac.in)}}

\maketitle

\begin{abstract}
Synchronized Random Access Channel (RACH) attempts by Internet of Things (IoT) devices could result in Radio Access Network (RAN) overload in LTE-A. 3GPP adopted  Barring Bitmap Enabled-Extended Access Barring (EAB-BB) mechanism that announces the EAB information (\emph{i.e.,} a list of barred Access Classes) through a barring bitmap as the baseline solution to mitigate the RAN overload. EAB-BB was analyzed for its optimal performance in a recent work. However, there has been no work that analyzes Barring Factor Enabled-Extended Access Barring (EAB-BF), an alternative mechanism that was considered during the standardization process. Due to the modeling complexity involved, not only has it been difficult to analyze EAB-BF, but also, a much more far-reaching issue, like the effect of these schemes on key network performance parameter, like eNodeB energy consumption, has been overlooked. In this regard, for the first time, we develop a novel analytical model for EAB-BF to obtain its performance metrics. Results 
obtained 
from our analysis and simulation are seen to match very well. Furthermore, we also build an eNodeB energy consumption model to serve the IoT RACH requests. We then show that our analytical and energy consumption models can be combined to obtain EAB-BF settings that can minimize eNodeB energy consumption, while simultaneously providing optimal Quality of Service (QoS) performance. Results obtained reveal that the optimal performance of EAB-BF is better than that of EAB-BB. Furthermore, we also show that not only all the three 3GPP-proposed EAB-BF settings considered during standardization provide sub-optimal QoS to devices, but also result in excessive eNodeB energy consumption, thereby acutely penalizing the network. Finally, we provide corrections to these 3GPP-settings that can lead to significant gains in EAB-BF performance.
\end{abstract}

\begin{IEEEkeywords}
Extended Access Barring, Green Communications, Internet of Things, IoT Access, LTE-Advanced,  RAN Overload.
\end{IEEEkeywords}

%
\IEEEpeerreviewmaketitle

\section{Introduction}
\label{intro}
\IEEEPARstart{I}oT devices are expected to drive many useful applications like smart grids and asset tracking \cite{Luigi}. These applications require the devices to report their data periodically to a remote server. Hence, there is a need for an infrastructure to provide them access to the backbone network. LTE-A is seen as a solution for this because of its salient features like high data rates, low latency, support for large bandwidth \cite{3G6} and good coverage. However, since these IoT devices report data periodically with an interval of the order of a few minutes to a day \cite{cheng,cheng2},  they lose uplink synchronization with the eNodeB between each access{\footnote{The device loses uplink synchronization if it does not communicate with the LTE-A base station (called as eNodeB) for 10-20 seconds since the eNodeB UE-INACTIVITY-TIMER expires \cite{timer}.} which can only be regained by completing the RACH procedure. But, for a certain class of devices like the smart meters, a large number 
of them making RACH attempts within a very short period of time results in many preamble collisions. This leads to RAN overload \cite{3G6} severely affecting the QoS (success probability, network access delay) of the devices. To mitigate the RAN overload, several techniques (see for instance \cite{3G6, cheng2, wong,Feng, Ki-Dong} and the references therein) have been proposed among which 3GPP considered Extended Access Barring (EAB) as the baseline solution. Two alternative barring mechanisms were considered for EAB configured devices \cite{cheng3}. 

\begin{itemize}[leftmargin=*]
 \item The first method is based on dividing the devices into 10 Access Classes (ACs) and applying the ON/OFF principle per AC 0-9. With this approach, all devices with EAB configured are either barred or not barred from making RACH attempts depending on their AC. System information update is required to end barring for the prohibited device classes or to change the barred classes. In order to allow all EAB devices to access the network at some point of time, the access opportunities need to be circulated between ACs \cite{cheng3}. In our work, we call this method as \emph{Bitmap Barring Enabled-EAB} and refer to it as EAB-BB.
 \item The second method is based on a probability value (\emph{barring factor}) and a timer (\emph{backoff timer}). In this method, devices that are configured with EAB would generate a random number between 0 and 1 prior to performing RACH. If the random number is lower than the barring factor, the EAB test is passed and devices may attempt RACH. Otherwise, they have to wait a given amount of time indicated by the backoff timer and draw a new random number before attempting RACH again. The \emph{Probability based barring} was considered by proposing three settings (EAB$\left(0.5,16\hspace{2pt}s\right)$, EAB$\left(0.7,8\hspace{2pt}s\right)$ and EAB$\left(0.9,4\hspace{2pt}s\right)$) during the standardization process \cite{cheng, cheng2}. In our work, we call this method as \emph{Barring Factor Enabled-EAB} and refer to it as EAB-BF\footnote{Note that EAB-BF was called EAB \cite{cheng2} before EAB-BB was also considered as an alternative method and adopted by 3GPP with the same name.   Hence we introduce the 
two names EAB-BF and EAB-BB in order to distinguish the two mechanisms from each other.}.
\end{itemize}

EAB-BB was adopted by 3GPP \cite{cheng3} and was analyzed for its optimal performance in a recent work \cite{cheng3}. However, there has been no work that investigates the optimal performance of EAB-BF. Hence, to the best of our knowledge, it is not known whether EAB-BB has the best performance among these two methods that were considered to enable EAB. Moreover in practice, for EAB-BB the system information needs to be updated which is costly both for UEs and the network. Also, the barred/unbarred approach may create peaks to RACH load when the broadcasted access information changes, \emph{i.e.,} when barring is ended, a burst of accesses may occur. In \cite{zhang}, the authors based on their analysis show that long paging cycle limits the performance of the barring phase and short paging cycle limits the performance of the release phase of EAB-BB. Hence, any QoS that requires either a long or a short paging cycle might lead to the aforementioned problems. Also, in \cite{cheng3}
, the authors adopt an algorithm to let the eNodeB to turn EAB on or off according to a congestion coefficient. But it is not clear how the choice of the congestion coefficient threshold value affects transmission periods of SIB14 and the paging cycle and in turn the success probability and mean access delay in general.

Because of the aforementioned reasons, there is a need to investigate and compare the optimal performance of EAB-BB with that of  EAB-BF that can avoid its operational complexities. Additionally, there is also an urgent need to investigate an equally important issue of minimizing the eNodeB energy consumption while ensuring optimal performance. Minimizing the eNodeB energy consumption will have far-reaching benefits because of the following reasons: (\emph{i}) the enormous growth in the cellular industry has already pushed the limits of energy consumption resulting in tens of Mega-Joules of energy being consumed annually, (\emph{ii}) IoT devices have very little data to report in each access, so there is a need to keep the eNodeB energy consumption at minimum while serving them and (\emph{iii}) eNodeBs are the largest energy consumers in a cellular network, taking 60\% of the total share \cite{survey}, so any reduction in their energy consumption will have a substantial bearing on the total network energy 
consumption.

Previous works that investigate methods to reduce eNodeB energy consumption consider only downlink data traffic (see for instance \cite{frenger,strinati,rohit,ps-sps}) in the absence of IoT devices. Works like \cite{Gerasimenko}, which include energy consumption of IoT devices along with their conventional performance metrics, help extend the life of IoT devices but not in scaling down the energy consumption of a cellular network.

Note that the primary purpose of RAN overload control is to relieve the RACH congestion. But it should not stop us from considering an important aspect like saving energy if we can achieve it simultaneously along with RAN overload control. Even though the major power consumption is believed to result from traditional voice/data services, there can be time intervals when this traffic is low to very low. It is known that the traffic in real cellular networks can be modeled as having a periodic sinusoidal profile indicating a large portion of time when the traffic is quite low \cite{bhaskar}. This, added to the fact that these IoT devices continuously make periodic RACH attempts lasting at least 15-20 seconds with a periodicity that could get as low as 5 minutes \cite{3G6} irrespective of other traffic, means that over large fractions of time, the eNodeB spends considerable energy serving these IoT devices. This entails us to investigate if the eNodeB energy consumption can be minimized while effecting RAN 
overload control without compromising on the desired QoS of IoT devices. Hence, this work concentrates only on those intervals when the traditional voice/data traffic is very low or non-existent.

Nevertheless, as a future work, our current analysis can be extended to intervals that have high voice/data traffic by looking at the number of extra subframes required to schedule the downlink traffic after dedicating some for handling the RACH attempts of the IoT devices. Note that even though the energy saving per access cycle in this case could be lesser compared to intervals when the traditional traffic is non existent, the fact that these devices are periodically required to clear RACH, amounts to a huge energy saving in the long run considering around 4 million base stations across the globe \cite{survey}.

Inspite the significant benefits mentioned above, to the best of our knowledge, there have not been any works that have either attempted to find the optimal performance of EAB-BF alone or along with simultaneously minimizing the eNodeB energy consumption.
Although EAB-BF has been analyzed in \cite{cheng2}, it was done only to investigate the performance of the three 3GPP settings that were considered during the standardization process and not to find the settings that ensure its optimal performance. Also, an important element missing in their analysis is the mean access delay.
 
In our work, we derive exact closed expressions for the expected number of devices that attempt and collide in each RACH slot when EAB-BF is employed. We then use these expressions to obtain the success probability and the mean access delay for an attempting device. This kind of an analysis allows us to accurately capture the dynamics in each RACH slot and also mathematically show how the success probability and the mean access delay depend on expected number of devices that attempt and collide, along with EAB-BF parameters. We next build a model for the energy consumed by the eNodeB to serve the IoT devices RACH requests. Combining the aforementioned closed form expressions with the energy consumption model of the eNodeB then helps obtain EAB-BF settings, that can simultaneously minimize eNodeB energy consumption, maximize success probability and minimize mean access delay for IoT devices while satisfying a given QoS constraint. Simulation results for various EAB-BF settings show that the closed form 
expressions obtained through our analysis yield accurate results.  Our work can be termed novel and gains significant importance because of the kind of contributions it allows us to make. Specifically, our contributions are summarized as follows.

\begin{enumerate}[leftmargin=*]
 \item We present a novel analytical model to obtain the success probability and the mean access delay of the IoT devices under the influence of EAB-BF. We also build an energy consumption model of the eNodeB to serve the IoT devices RACH requests.
 
 \item We next illustrate how to combine our analytical model of EAB-BF with eNodeB energy consumption model, to obtain optimal EAB-BF settings, \emph{i.e,} the barring parameters, that minimize the eNodeB energy consumption while simultaneously providing desired QoS to IoT devices during the RACH procedure. Specifically, we illustrate how our analytical expressions can help in building search algorithms to obtain these optimal settings, through an example. To that end, this work could also perhaps motivate newer search algorithms if required.
 
 \item Using the results obtained from the example search algorithm that we present, the optimal performance of EAB-BF is shown to be better than that of EAB-BB which is considered state of the art. This indicates that EAB-BB may not be the best choice to handle densely populated IoT devices in its current form or requires further research to improve its performance.
 
 \item Our results also reveal that all the three 3GPP-proposed settings for EAB-BF considered during standardization not only provide sub-optimal QoS to devices, but also result in excessive eNodeB energy consumption, thereby acutely penalizing the network. Our study then shows that corrections can be made to these 3GPP settings so that the energy consumed by the eNodeB and the mean access delay can be reduced by 50\% with more than 50\% gain in success probability compared to one of the three 3GPP settings. A 50\% reduction in energy consumption along with a substantial reduction in access delay can be obtained against the other two 3GPP settings without any degradation in the success probability. 
 
 This is in contrast to the general belief that a gain in success probability can be achieved only at the cost of mean access delay or vice-versa. In this regard, a key insight that our work provides is that a decrease in mean access delay is possible either with an increase or without any decrease in success probability if we move from a sub-optimal performance point to an optimal performance point or in other words, if we modify a sub-optimal setting to make it an optimal setting.  
 
\end{enumerate}

The rest of the paper is organized as follows. The system model is developed in Section \ref{system-model-section}. In Section \ref{energy-consumption-section}, we develop the LTE-A eNodeB energy consumption model to serve IoT devices RACH requests. In Section \ref{analysis-section}, we define the performance metrics which motivates the analysis of EAB-BF. The joint optimization of EAB-BF and eNodeB energy consumption is investigated in Section \ref{optimization-section} in which we also present an example search algorithm that can be built from our analytical expressions to obtain the optimal performance of EAB-BF. Results obtained from this algorithm are then used to compare with the performance of EAB-BB and the 3GPP proposed EAB-BF settings. Section \ref{conclusion-section} concludes the paper.

\begin{table}
\footnotesize
\centering
\footnotesize \caption{SUMMARY OF THE SYMBOLS USED}
\begin{tabular}{ ||p{0.7cm}|p{4.9cm}|p{1.9cm}||  }
 \hline
 \textbf{Symbol} & \textbf{Parameter} & \textbf{Typical Value used in simulations} \\
 \hline
 $N$ & Total number of IoT devices in the cell & 30000
 \\
 \hline
 $K$  & Number of preambles & 54
\\
\hline
$W$ & Backoff Window interval after collision in msec & 20 ms
\\
\hline
$r_p$ & Rach periodicity in msec & 5 ms
\\
\hline
$P_{eab}$ & EAB Barring Probability & 0.5, 0.7 and 0.9
\\
\hline
$T_{eab}$ & EAB Backoff Time & 16 s, 8 s and 4 s
\\
\hline 
\end{tabular}
\label{table:1}
\end{table}
\section{System Model}\label{system-model-section}
We consider a set of IoT devices already registered with the eNodeB. The devices get activated periodically and try to access the network to send data. The periodicity can range from a few minutes to a few hours or days \cite{3G6}. We assume that each access period (called as \emph{cycle} henceforth) begins at time $t=0$ for simplicity. Since a device is inactive for at least a few minutes between each network 
access, it gets detached from its serving eNodeB. Hence, each device tries to re-establish uplink synchronization with the eNodeB through the RACH procedure in every cycle, after it gets activated at time $0 \leq t \leq T_A$ according to  a density function $p(t)$. The density function $p(t)$ could be derived from a truncated beta distribution specified by 3GPP \cite{3G6} as follows,
 
     \begin{equation}\label{pt}
      p(t) = \frac{t^{\alpha-1}(T_A-t)^{\beta-1}}{T_A^{\alpha+\beta-1}Beta(\alpha,\beta)},
      \end{equation} where, $\alpha, \beta > 0$ and $Beta(\alpha,\beta)$ is the Beta function. 
      
      Denoting the total number of IoT devices in the cell by $N$, the access intensity $\lambda_i$ of the IoT devices is given by \cite{3G6},
      \begin{equation}\label{access-intensity-eqn}
      \lambda_i = N\int_{t_{i-1}}^{t_i}p(t)\mathrm{d}t,
      \end{equation} where, $t_i$ is the time of the $i^{th}$ RACH slot within that access cycle. We assume that within the activation time $T_A$, there are $M$ RACH slots as shown in Fig. \ref{rach-cycle}. The RACH  periodicity which is the length of the interval between two successive RACH slots is denoted by $r_p$. Let the interval $\left[t_{i-1}, t_{i-1}+r_p\right]$ = $\left[t_{i-1}, t_{i}\right]$ be called interval $i$. All the devices that get activated in the interval $i$ are considered as new arrivals for the $i^{th}$ RACH slot. Note that all the devices get activated within time $T_A$, and for $t > T_A$, devices that experience collisions can re-attempt the RACH procedure.
\begin{figure}
\centering
\includegraphics[width = 3.5in]{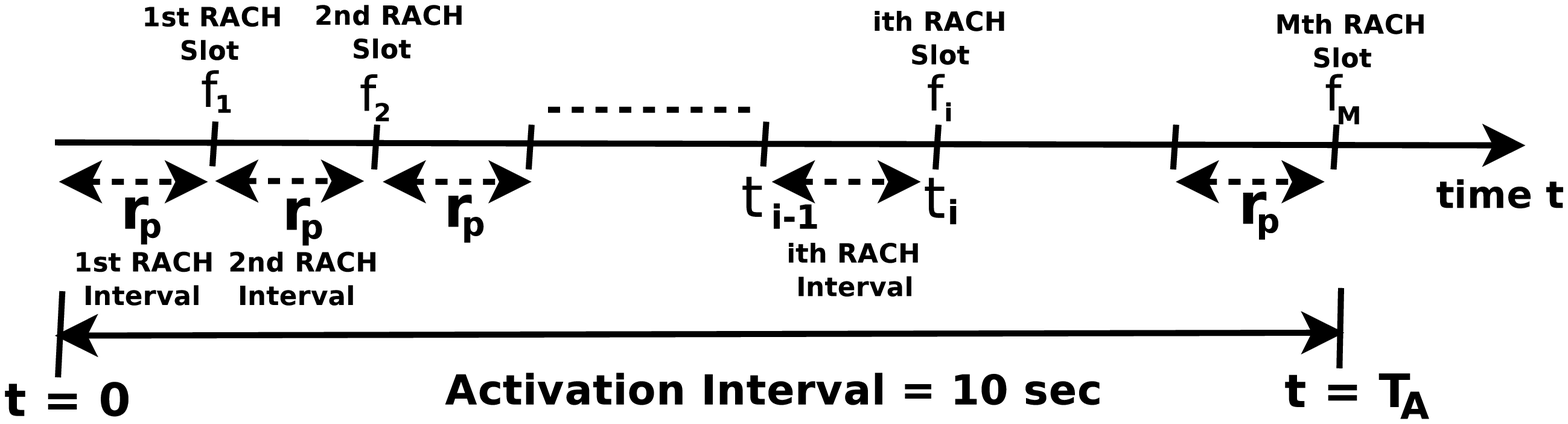}
\caption{System Model for RACH Slots and New Arrivals}
\label{rach-cycle}
\end{figure}

EAB-BF is denoted by EAB$\left(P_{eab},T_{eab}\right)$.  For clarity, we use ``EAB-BF'' to indicate the EAB mechanism that employs the barring factor method and ``EAB test'' to indicate the barring test that a device takes by generating a random number, to be allowed to make a RACH attempt. As explained in the previous section, whenever a device fails in the EAB test (by generating a number larger than $P_{eab}$), it backs off for a  fixed time $T_{eab}$ seconds and repeats this procedure until a number smaller than $P_{eab}$ is chosen \cite{cheng2}\footnote{Note that in another mechanism called as Access Class Barring (ACB) also, the device generates a random number between 0 and 1 and backs off if the generated number is greater than the barring probability, but the back off time is random \cite{3gpp-acb} and hence is different from EAB-BF. Moreover, since our goal in this work is to analyze the optimal performance of EAB-BF \cite{cheng2} and compare it with EAB-BB \cite{cheng3} which was adopted by 3GPP 
over EAB-BF, we consider a fixed back off time as suggested in \cite{cheng2}. To the best of our knowledge, this is the first work which derives the optimal performance of EAB-BF and compares with that of EAB-BB.}. If the device passes the EAB test (by generating a number less than $P_{eab}$), it is allowed to start the RACH procedure by sending a preamble to the eNodeB after choosing it randomly from a set of 54 sequences. A collision occurs when two or more devices choose same preamble sequences resulting in RACH failure{\footnote{This is a conservative assumption as a UE with much stronger signal than others may be selected even in the 
event of collision due to the \emph{capture effect}. But we ignore this possibility in our work.}. If the preamble collides, the device backs off for a time that is uniformly distributed over $\left[0,W-1\right]$ ms and again performs EAB test after returning from backoff. It follows this 
procedure after each collision. On the other hand, if the preamble does not collide, then, we assume that the preamble is detected by the eNodeB\footnote{There is a chance of the preamble not being detected even when there is no collision, albiet, with a very low probability in LTE-A since the random access preambles in LTE-A are normally 
orthogonal to other uplink transmissions \cite[Section 19.3.1.2]{stefania}. Hence we ignore this event.}, in which case, the eNodeB acknowledges  with a Random Access Response (RAR) message within the RAR window (RAR window is a set of subframes in the downlink in which the eNodeB sends the RAR messages). The RAR message is used to send the timing advance information, uplink resource grant for message 3 to be sent by the UE and temporary identify the UE among other things. After some RAR processing time, the device transmits Radio Resource Control (RRC) connection request message via the Physical Uplink Shared Channel (PUSCH) using the resources granted by RAR. As eNodeB needs to establish which device sent which preamble, collision resolution process is required. The RACH procedure ends with a successful reception of RRC connection set-up message  from eNodeB. We assume that both $T_{eab}$ and $W$ are integer multiples of the RACH periodicity $r_p$ for simplicity.

Depending on the QoS requirement of the devices, the eNodeB has to decide the optimal settings of EAB-BF, \emph{i.e.,} $P_{eab}^{*}$ and $T_{eab}^{*}$, that provide the desired QoS with minimum eNodeB energy consumption. To make this decision, the eNodeB holds a database (which can be created offline by the network operator and uploaded into eNodeB as indicated even in \cite{cheng3} or can be created by the eNodeB itself) of the minimum energy consuming settings for each of the possible QoS requirements of the IoT devices. The eNodeB can know about the QoS requirement of devices during registration through the establishment cause information in RRC Connection Request Message sent by the devices during the initial RACH procedure \cite{TS-331}. The eNodeB then has to communicate optimal barring parameters pair $\left(P_{eab}^{*},T_{eab}^{*}\right)$ once to devices, unlike EAB-BB, in which the system information needs to be updated even within an access cycle when the devices are attempting RACH which is costly 
both for UEs and the eNodeB.

Even though the optimal barring parameters pair $\left(P_{eab}^{*},T_{eab}^{*}\right)$ have to be recomputed whenever the number of IoT devices $N$ changes in the cell, note that in the current context, the devices are either expected to be deployed by the network operator or their count is assumed to be known and remain unchanged \cite{3G6}. Similar assumption is also made in \cite{cheng3}, wherein, the authors provide optimal values of transmission periods for SIB14 messages and paging cycle for a set of $N$ devices. The device count being static is justified by the fact that the number of devices do not change frequently, and might also be regulated by the operator. In such a scenario, then, we can assume that the new optimal settings could be obtained by the presented algorithm before the next access cycle of the devices begins, since its periodicity is at least few minutes. Alternatively, the operator could bar the new devices from accessing the network till a new set of optimal settings are obtained 
for the updated $N$. In either case, the maximum count for the devices within a single LTE-A cell is currently specified to be 30000, as per the 3GPP specifications \cite{3G6} and hence the time taken for the search can be well within the acceptable bounds. In view of the above arguments, the algorithm presented in this work could be either used to create the database offline and fed into the eNodeB or could run in the eNodeB itself without proving to be too costly. Also, EAB-BF does not encounter the problem of some QoS requirements limiting its performance unlike EAB-BB, in which if those QoS requirements demanded a long or short paging cycle could limit the performance of the barring phase and release phase.

Overall, the main goal and significance of our work lies in building a novel analytical framework to obtain the performance metrics of EAB-BF and then exhibiting a way to build search algorithms using these performance metrics, to obtain its optimal settings, which has been illustrated through an example algorithm. Additionally, the algorithm presented in this paper also explains how the near-optimal performance values and settings of EAB-BF were obtained in our work. In this regard, this work could perhaps motivate newer search algorithms if required. Equally important is its contribution in being able to help us compare EAB-BF with EAB-BB, and showcase its superior performance.

\section{ENodeB Energy Consumption to serve IoT RACH Requests}\label{energy-consumption-section}
Note that in LTE-A, if a preamble is successfully transmitted, the actual data will then be transmitted without contention on Physical Uplink Shared Channel (PUSCH) via scheduled transmissions and the time it takes is fixed \cite{wong}. Therefore, the time taken (or mean number of attempts required) for all the IoT devices to successfully transmit Step 1 preamble sequences during RACH is the dominant part contributing to the QoS provided to IoT devices. Hence, we concentrate on the performance only during the first step of the RACH procedure. Therefore, RAR messages that use PDSCH (Physical Downlink Shared Channel) resources in the downlink subframes and transmitted in response to preamble transmissions by the devices are identified as energy consuming events. 

In this section, we will first provide a brief background of LTE-A downlink physical layer structure. We will then point to a set of 3GPP specifications to derive an exact expression for maximum number of RAR messages that can be sent in each subframe of the RAR window. Next, we use this model to evaluate the energy consumed by the eNodeB when the IoT devices make RACH attempts in a cycle.  The notations used and their meaning are summarized in Table \ref{table-for-RAR}.
\begin{table}
\footnotesize
\centering
\footnotesize \caption{SUMMARY OF THE SYMBOLS USED. TYPICAL VALUES USED IF ANY ARE SHOWN IN BRACKETS.}
\begin{tabular}{ ||p{0.8cm}|p{7.2cm}||}
 \hline
 \textbf{Symbol} & \textbf{Parameters represented and Typical Values used}
 \\
 \hline
 $B$ & Bandwidth of the LTE-A cell (\textbf{5 MHz})
 \\
 \hline
 $M_{cs}$ & Modulation and Coding Scheme (MCS) used to transmit RAR messages (\textbf{0.930})
 \\
 \hline
 $L_b$  & Number of physical bits  required to transmit one RAR message (\textbf{61})
\\
\hline
$D_s$ & Number of symbols in a PRB that can carry data bits (\textbf{120})
\\
\hline
$N_{rar}^{prb}$ & Number of RAR messages that can be included in one PRB (\textbf{4})
\\
\hline
$N_{prb}$ & Number of PRBs available in a subframe (\textbf{25})
\\
\hline
$N_{max}^{rar}$ & Maximum number of RAR messages that can be sent in one subframe (\textbf{39})
\\
\hline
$N_{sf}$ & Number of subframes required to send $r$ RAR messages
\\
\hline
$W_{rar}$ & Size of the RAR window (\textbf{5 subframes})
\\
\hline
$P_o$ & Fixed power consumption part of eNodeB (\textbf{170 W})
\\
\hline 
$P_{t}$ & Power transmitted per PRB (\textbf{0.8 W})
\\
\hline
$\delta$ & Power Amplifier Efficiency (\textbf{30\%})
\\
\hline
$E_{c}^{k}$ & Energy consumed to transmit data on $k$ PRBs
\\
\hline
$E_{in}^{r}$ & Energy consumed to send $r$ RAR messages
\\
\hline
\end{tabular}
\label{table-for-RAR}
\end{table}
\subsection{PRB Requirement for RAR Messages in LTE-A}
In LTE-A, each downlink frame is of 10 ms duration and consists of ten subframes. Each subframe is of 1 ms duration. A subframe consists of two 0.5 ms slots, with each slot containing seven OFDM symbols. In the frequency domain, the system bandwidth $B$ is divided into several subcarriers, each with a bandwidth of 15 kHz. A set of 12 consecutive subcarriers for a duration of 1 slot is called a Physical Resource Block (PRB). The total amount of available PRBs during one subframe depends on the number of subcarriers, \emph{i.e.}, the total bandwidth allocated for the LTE-A carrier. For example, 10 MHz cell bandwidth would correspond to 600 subcarriers and 100 PRBs in each subframe. In LTE-A, data is transmitted over a pair of PRBs (in time domain) called as Transmission Time Interval (TTI).

Although all the PRBs available in the total bandwidth can be allocated for sending RAR messages in each subframe within the RAR window, the following constraints are imposed for the transmission of RAR messages:
\begin{enumerate}[leftmargin=*]
 \item Only QPSK modulation is allowed for RAR messages and the effective channel code rate determines the number of information bits being transmitted \cite{3G9}. There are tables \cite[Tables 7.1.7.1-1, 7.1.7.2.1-1 and 7.1.7.2.3]{3G9} that map the effective channel code rate to the number of RAR information bits being transmitted.
 \item Irrespective of the bandwidth, a maximum of 2216 RAR information bits can be transmitted in a single subframe within the RAR window \cite{3G10}.
 \item The effective channel code rate chosen should ensure that the efficiency, \emph{i.e.,} the RAR information bits per symbol $\leq$ 0.930 \cite{3G9}.
\end{enumerate}

Each RAR message requires 7 bytes, \emph{i.e.}, 56 bits (1 for the MAC subheader and 6 for the actual RAR message) \cite[Section 6.1.5]{3G8}. There can be an optional additional byte containing a backoff indicator for all RARs. Let $M_{cs}$ denote the effective channel code rate chosen. Assuming that the RAR messages are sent only to the devices whose preambles were successfully detected (\emph{i.e.}, no Backoff Indicator is included in RAR message), the number of physical bits $L_b$ required to transmit one RAR message is equal to $\lceil 56/M_{cs} \rceil$ bits, where, $\lceil . \rceil$ denotes the ceil function. Let the number of symbols in a PRB that carry data bits be denoted by $D_s$. Then $N_{rar}^{prb}$, which denotes the number of RAR messages that can be included in one PRB, will be equal to $2D_sM_{cs}/56$, since QPSK is used. The number of PRBs, \emph{i.e.}, $N_{prb}$, in a subframe is equal to $B (MHz)/180 (kHz)$. Hence, denoting the maximum number of RAR messages that can be sent in one subframe 
by $N_{max}^{rar}$, the following relation holds,
\begin{equation}
\label{max-rars}
 N_{max}^{rar} = min \left( \frac{2D_sM_{cs}N_{prb}}{56}, 39 \right),
\end{equation}since, the maximum number of RAR messages that can be constructed with 2216 bits is equal to $\frac{2216}{56} \approx 39$.
\subsection{Energy Consumption Model for RAR Messages}
A well-accepted model of the energy consumption in a typical LTE-A eNodeB is provided in \cite{frenger, rohit}. The energy consumption of a eNodeB increases linearly with the utilization of the power amplifier (no. of PRB pairs on which data is scheduled in the downlink). Since a TTI lasts for 1 ms in LTE-A, the  energy consumed to transmit data on $k$ PRB pairs in a TTI can be computed as follows \cite{frenger, rohit}:
\begin{equation} 
\label{powerinput}
 E_{c}^{k} = \left(P_o + k\delta P_{t}\right)10^{-3}\hspace{2pt} \text{J},
\end{equation}where, $\delta$ is the power amplifier efficiency and  $P_{t}$ is the power transmitted per PRB and $P_o$ is a constant power factor signifying a fixed power consumption part of the power amplifier. Since the eNodeB practices Adaptive Modulation and Coding in LTE-A, we assume that the power transmitted per PRB is fixed and equal to $P_{t}$.  
In order to investigate the energy consumed by the eNodeB in serving the IoT devices RACH requests, we need to consider the downlink transmissions made by the eNodeB to serve the devices RACH requests which includes all the RAR messages sent in response to successful preamble transmissions. Now, if preambles from $r$ devices are successful in a RACH slot, then the number of subframes it takes to send RAR messages to all the successful devices would be given by:
\begin{equation} 
\label{no-of-subframes-for-rar}
N_{sf} = \lceil \frac{r}{N_{max}^{rar}} \rceil.
\end{equation} Denoting the size of the RAR window by $W_{rar}$, we can then express the energy consumed in joules by the eNodeB to send RAR messages to these $r$ successful devices, \emph{i.e.,} $E_{in}^{r}$, as,
\begin{equation} 
\small
\label{powerinput-rar}
 E_{in}^{r} = \sum_{j=1}^{\mathclap{\substack{min(N_{sf},W_{rar})}}}\left(P_o + \frac{min\left(r-(j-1)N_{max}^{rar},N_{max}^{rar}\right)56}{2D_sM_{cs}}\alpha P_{t}\right)10^{-3}.
\end{equation}

In \eqref{powerinput-rar}, the upper limit in the summation indicates that the number of subframes used to send RAR messages cannot exceed the RAR window size $W_{rar}$. Therefore, if $r$ is large enough for $N_{sf}$ to exceed $W_{rar}$, only as many RAR messages are sent as can be included in $W_{rar}$ subframes of the RAR window. The rest of the devices will back off assuming their preambles collided and return back to attempt again. The term within the summation gives the energy consumed in each subframe which depends on the number of RAR messages sent. The term $r-(j-1)N_{max}^{rar}$ indicates the number of devices that are yet to receive RAR messages in the $j^{th}$ subframe within the RAR window. Hence, the term $min\left(r-(j-1)N_{max}^{rar},N_{max}^{rar}\right)$ points out that the number of devices which receive the RAR messages in the $j^{th}$ subframe of the RAR window is $r-(j-1)N_{max}^{rar}$ if $r-(j-1)N_{max}^{rar} < N_{max}^{rar}$, else the number is limited to  $N_{max}^{rar}$.
\section{Performance Metrics and Analysis of EAB-BF}\label{analysis-section}
\begin{table}
\footnotesize
\centering
\footnotesize \caption{SUMMARY OF THE SYMBOLS USED.}
\begin{tabular}{ ||p{0.8cm}|p{7.2cm}||}
 \hline
 \textbf{Symbol} & \textbf{Parameters}
 \\
 \hline
 $F_i$ & Number of devices activated in the interval $i$
 \\
 \hline
 $\varGamma_i$  & Total number of devices (new and retransmissions) that arrive in RACH slot $i$
\\
\hline
$A_i$ & Number of devices that pass the EAB test and make a preamble transmission in RACH slot $i$
\\
\hline
$\varGamma^{'}_{i}$ & Number of devices which fail in the EAB test in RACH slot $i$
\\
\hline
$S_i$ &  Number of devices that are successful in their preamble transmission in slot $i$
\\
\hline
$C_i$& Number of devices that collide in slot $i$
\\
\hline
$\varUpsilon_i$& Number of devices successful in preamble transmission till slot $i$
\\
\hline
\end{tabular}
\label{table-for-analysis}
\end{table}
Having obtained the energy consumption model of the eNodeB for RAR messages, we now analyze EAB-BF to optimize its settings so that minimum eNodeB energy is consumed to provide the desired QoS (Success Probability and Mean Access Delay) to the devices. For a random variable $X$, we use $\mathbb{E}\left[X\right]$ and $\overline{X}$  interchangeably throughout this paper to denote the mean of $X$. The probability of an event $A$ is denoted by P($A$). Similarly, the conditional expection of $X$ given $A$ is denoted by $\mathbb{E}\left[X|A\right]$ and P($B|A$) denotes the conditional probability of $B$ given $A$.

The following notations are used and summarized in Table \ref{table-for-analysis}. Let $F_i$ represent the number of devices activated in the interval $i$. Then $\mathbb{E}\left[F_i\right]$ represents the access intensity given by \eqref{access-intensity-eqn}. Let $\varGamma_i$ denote the total number of devices (new and collided) that arrive in slot $i$ to make RACH attempts. Out of  $\varGamma_i$ that arrive, the number of devices that pass the EAB test and make a preamble transmission in the $i^{th}$ slot be denoted by $A_i$. Let $\varGamma^{'}_{i}$ denote the number of devices that fail in the EAB test in slot $i$ and back off for time $T_{eab}$. Out of the $A_i$ that attempt, let $S_i$ denote the number of devices that succeed in the $i^{th}$ slot while $C_i$ denotes the number of devices that collide in slot $i$. Let $\varUpsilon_i$ denote the number of devices that have cleared RACH successfully till slot $i$.  

\subsection{Performance Metrics}
In essence, if we can obtain the expected number of RACH attempts $\mathbb{E}\left[A_i\right]$ and collisions $\mathbb{E}\left[C_i\right]$ for slot $i$, and the expected number of successes $\mathbb{E}\left[\varUpsilon_i\right]$  till slot $i$, then with $N$ devices in the cell, the collision probability denoted by $P_C$, success probability denoted by $P_S$ and the mean of the access delay denoted by $T_{AD}$ can be obtained as follows \cite{wali}:
\subsubsection{Success Probability and Mean Number of Attempts}
A device makes a preamble transmission on passing the EAB test and because we have assumed that a non-colliding preamble implies successful RACH completion, the preamble success probability denoted by $P_S$ can be obtained as:
\begin{equation}
 P_S = 1 - P_C = 1 - \frac{\sum\limits_{i:\mathbb{E}\left[\varUpsilon_i\right] < N} \mathbb{E}\left[C_i\right]}{\sum\limits_{i:\mathbb{E}\left[\varUpsilon_i\right] < N} \mathbb{E}\left[A_i\right]}.
\end{equation}
Since each RACH attempt is independent for a device, the number of attempts to clear the RACH is a goemetric random variable. Denoting the number of attempts to successfully clear the RACH procedure as $N_A$, the expected number of attempts $\mathbb{E}\left[N_A\right]$ can be obtained as:
\begin{equation}
\label{ea}
\mathbb{E}\left[N_A\right] = \frac{1}{P_S} = \frac{\sum\limits_{i:\mathbb{E}\left[\varUpsilon_i\right] < N} \mathbb{E}\left[A_i\right]}{\sum\limits_{i:\mathbb{E}\left[\varUpsilon_i\right] < N} \left(\mathbb{E}\left[A_i\right]-\mathbb{E}\left[C_i\right]\right)}.
\end{equation}
\subsubsection{Mean Access Delay}
The mean access delay is the mean delay incurred by a device to successfully complete the RACH procedure starting from the time of its activation. It includes the backoff delays because of EAB failures and the RACH failures.

Clearly, the probability that a device passes the EAB test is $P_{eab}$. Also, the number of attempts needed to pass the EAB test is a geometric random variable since each attempt is independent. Therefore, the mean number of attempts before a device passes the EAB test is $(1-P_{eab})/P_{eab}$. But it requires mean number of attempts $\mathbb{E}\left[N_A\right]$ to succeed in RACH itself as in \eqref{ea}. Let us assume that for each failed RACH attempt, it takes a time $T_R$ to realize there was a collision and then a  backoff time $T_B$ before the device starts trying to pass EAB test again for attempting RACH. Note that $T_R$ is considered because a device learns about the collision after not receiving the RAR message (since we ignore the \emph{capture effect}). Since a device backs off for $T_{eab}$ whenever it fails to pass the EAB test and since the 
$N_A^{th}$ RACH attempt is a success, ignoring the small duration between its activation and the next RACH opportunity that appears, we can obtain the expected value of the access delay $T_{AD}$ for a device as:
\begin{multline*}\mathbb{E}\left[T_{AD}\right] = \mathbb{E}\left[T_{rach}\right] + \left[\frac{(1-P_{eab})T_{eab}}{P_{eab}} \mathbb{E}\left[N_A\right]\right]\\
+ \left[\left(\mathbb{E}\left[N_A\right]-1\right) \left(\mathbb{E}\left[T_R\right] + \mathbb{E}\left[T_B\right]\right)\right],
\end{multline*}where, $T_{rach}$ is the time for the device to get the contention resolution message in the RACH attempt that is successful. The mean backoff time after collision is $W/2$ since the backoff time is uniformly distributed over $\left[0,W-1\right]$ window. Hence, 
\begin{multline}\label{ed}
\mathbb{E}\left[T_{AD}\right] = \mathbb{E}\left[T_{rach}\right] + \left[\frac{(1-P_{eab})T_{eab}}{P_{eab}} \mathbb{E}\left[N_A\right]\right]\\
+ \left[\left(\mathbb{E}\left[N_A\right]-1\right) \left(\mathbb{E}\left[T_R\right] + W/2\right)\right].
\end{multline}
\subsubsection{Energy Per Cycle}
Since our goal is to minimize the eNodeB energy consumption also, we obtain the mean eNodeB energy consumed per cycle in joules  denoted by $\overline E_{c}^{cyl}$ in serving the RACH requests of the IoT devices as:
\begin{equation}
\footnotesize
\label{eqn-for-p-in-avg}
  \overline E_{c}^{cyl}=\sum\limits_{i:\mathbb{E}\left[\varUpsilon_i\right] < N} \sum_{j=1}^{\mathclap{\substack{min(N_{sf},W_{rar})}}}                 
        P_o + \frac{min(\mathbb{E}\left[S_i\right]-(j-1)N_{max}^{rar},N_{max}^{rar})56\alpha P_{t}}{2D_sM_{cs}}10^{-3}. 
\end{equation}
The inner sum in \eqref{eqn-for-p-in-avg} evaluates the mean energy consumed to transmit RAR messages to $\mathbb{E}\left[S_i\right]$ devices which are successful in RACH slot $i$ of the access cycle. This is essentially \eqref{powerinput-rar} with $r$ replaced by $\mathbb{E}\left[S_i\right]$. The outer sum covers all RACH slots till all the devices are successful in the cycle.
\subsection{Analysis of EAB-BF}\label{analysis}
Clearly, to obtain both the QoS ($P_S$ and $\mathbb{E}\left[T_{AD}\right]$) provided by EAB-BF and the mean energy per cycle $\overline E_{c}^{cyl}$, the terms $\mathbb{E}\left[A_{i}\right]$, $\mathbb{E}\left[S_{i}\right]$ and $\mathbb{E}\left[C_{i}\right]$ need to be computed. We next analyze EAB-BF to obtain closed form expressions for these quantities. With the notations defined earlier, we have the following results:
\begin{claim}
  The expected number of devices that make RACH attempt (preamble transmission) in slot $i$ is given by the following expression:
     \begin{multline}\label{final-eqn-for-Ai}
     \mathbb{E}\left[A_{i}\right] = \sum_{j=0}^{Q} P_{eab}(1-P_{eab})^j\left[\mathbb{E}\left[F_{i-jq}\right] + \sum_{\mathclap{\substack{l=i-(jq+\frac{W}{r_p})}}}^{i-(jq+1)} \frac{r_p\mathbb{E}\left[C_l\right]}{W}\right];\\      
      Qq+1\leq i\leq (Q+1)q, \hspace{5pt} Q \in \mathbb{Z}^+, q \triangleq \frac{T_{eab}}{r_p}.
     \end{multline}
\end{claim}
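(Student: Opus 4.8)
The plan is to decompose the pool of devices that pass the EAB test in slot $i$ according to \emph{when} they last arrived to take the test and \emph{how many} consecutive EAB failures they incurred before finally passing. The key observation is that a device takes the EAB test afresh only at two kinds of epochs: when it is newly activated, or when it returns from a collision backoff. Each such fresh arrival then generates a chain of EAB failures, and because a failed device backs off for exactly $T_{eab}=q\,r_p$ (i.e.\ $q$ RACH slots) without transmitting a preamble, these failure epochs are spaced exactly $q$ slots apart. Consequently, any device attempting in slot $i$ can be traced back to a unique fresh arrival in slot $i-jq$ for some integer $j\ge 0$, having failed the EAB test exactly $j$ times in between, with per-device probability $(1-P_{eab})^j P_{eab}$.

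First I would characterize the expected fresh-arrival pool at a generic slot $m$. It comprises the $F_m$ new activations in interval $m$ together with the devices that collided in some earlier slot $l$ and have just returned from their uniform $[0,W-1]$ ms collision backoff in slot $m$. Since $W$ is assumed to be an integer multiple of $r_p$, the window $[0,W-1]$ ms partitions evenly into $W/r_p$ blocks of width $r_p$, so a colliding device returns in any one of the $W/r_p$ subsequent RACH slots with probability exactly $r_p/W$. Taking expectations and summing over the collision slots $l$ that can reach $m$ (namely $m-W/r_p\le l\le m-1$), the expected fresh-arrival pool at $m$ is $\mathbb{E}[F_m]+\sum_{l=m-W/r_p}^{m-1}\tfrac{r_p}{W}\mathbb{E}[C_l]$.

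Next I would apply a thinning argument over the EAB tests. Because each device draws its EAB random number independently of the other devices and of the arrival and collision counts, the number of devices from the slot-$m$ pool whose first successful attempt falls $jq$ slots later is a binomial thinning of the pool with per-device probability $(1-P_{eab})^jP_{eab}$; by the law of total expectation this factorizes into $(1-P_{eab})^jP_{eab}$ times the expected pool size. Setting $m=i-jq$, summing over all $j$ whose contribution lands in slot $i$, and substituting the fresh-arrival expression (so its collision-sum limits become $i-(jq+W/r_p)$ up to $i-(jq+1)$) reproduces the claimed formula. The summation upper limit is forced by the constraint $i-jq\ge 1$, i.e.\ $j\le (i-1)/q$, so $Q=\lfloor (i-1)/q\rfloor$, which is precisely the stated range $Qq+1\le i\le (Q+1)q$.

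The main obstacle will be justifying the two stochastic steps cleanly rather than heuristically: that the uniform collision backoff yields an \emph{exact} per-slot return probability $r_p/W$ (hinging on $W\equiv 0 \pmod{r_p}$ and on the even partition of $[0,W-1]$ ms into $W/r_p$ slots), and that the per-device EAB coin flips are independent of the random arrival and collision counts so that expectation passes through the binomial thinning. A secondary point requiring care is avoiding double counting: since a device that fails the EAB test never transmits a preamble, it cannot collide between failures, so each attempting device maps to exactly one (fresh-arrival slot, failure-count) pair $(i-jq,\,j)$. This makes the decomposition a genuine partition, and the geometric sum over $j$ an exact unrolling of the recursion $\mathbb{E}[\varGamma_i]=(\text{fresh at }i)+(1-P_{eab})\mathbb{E}[\varGamma_{i-q}]$ together with $\mathbb{E}[A_i]=P_{eab}\mathbb{E}[\varGamma_i]$.
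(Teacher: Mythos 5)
Your proposal is correct and takes essentially the same route as the paper's proof: you unroll the recursion $\varGamma_i = F_i + \sum_l B_{l,i} + \varGamma^{'}_{i-q}$ by tracing each attempting device back to a unique fresh arrival $j$ EAB failures (i.e., $jq$ slots) earlier, factor out $(1-P_{eab})^{j}P_{eab}$ via independence of the EAB draws (which the paper formalizes with the tower property of conditional expectation), and derive the per-slot return probability $r_p/W$ from the uniform collision backoff with $W$ an integer multiple of $r_p$. Your truncation argument $i-jq\geq 1$, giving $Q=\lfloor (i-1)/q\rfloor$, is exactly the paper's observation that $\varGamma_{i-(Q+1)q}=0$.
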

\begin{proof}
The proof is shown in Appendix \ref{proof1}.
\end{proof}
We now turn our attention to the expected number of collisions $\mathbb{E}\left[C_i\right]$. 
\begin{claim}
  The expected number of devices that collide in slot $i$ can be approximated by the following expression:
 \begin{equation}\label{final-eqn-for-Ci}
 \mathbb{E}\left[C_i\right] = \mathbb{E}\left[A_i\right] - \mathbb{E}\left[A_i\right]\left(1-\frac{1}{K}\right)^{\mathbb{E}\left[A_i\right]-1}.
 \end{equation}
\end{claim}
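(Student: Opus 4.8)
The plan is to condition on the number of attempting devices $A_i$, evaluate the expected number of successes through indicator variables, and then collapse the resulting expectation of a nonlinear function via a mean-field substitution. The only randomness that matters here is the preamble selection: once the EAB test has been passed, each of the $A_i$ contenders picks one of the $K$ preambles independently and uniformly, and a device succeeds precisely when it is the unique selector of its chosen preamble.

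First I would fix $A_i = a$ and introduce, for each attempting device $j \in \{1,\dots,a\}$, the indicator $X_j$ of the event that no other contender selects the same preamble. Since the remaining $a-1$ devices each avoid device $j$'s preamble independently with probability $1 - 1/K$, we have $\mathrm{P}(X_j = 1) = (1-1/K)^{a-1}$. Linearity of expectation then gives $\mathbb{E}[S_i \mid A_i = a] = \sum_{j=1}^{a} \mathrm{P}(X_j=1) = a(1-1/K)^{a-1}$, so the conditional expected number of collided devices is $\mathbb{E}[C_i \mid A_i = a] = a - a(1-1/K)^{a-1}$. Equivalently, counting the preambles chosen by exactly one device, $K\binom{a}{1}(1/K)(1-1/K)^{a-1}$, yields the same success count, which serves as a useful cross-check.

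Taking the expectation over $A_i$ produces the exact identity $\mathbb{E}[C_i] = \mathbb{E}[A_i] - \mathbb{E}\big[A_i(1-1/K)^{A_i-1}\big]$. The term $\mathbb{E}[A_i(1-1/K)^{A_i-1}]$ is the expectation of a nonlinear function of $A_i$ and admits no convenient closed form, which is exactly why the statement is phrased as an approximation. The key step is therefore to move the expectation inside the nonlinear term by replacing $A_i$ with its mean, i.e.\ $\mathbb{E}[A_i(1-1/K)^{A_i-1}] \approx \mathbb{E}[A_i](1-1/K)^{\mathbb{E}[A_i]-1}$, which delivers \eqref{final-eqn-for-Ci}.

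I expect this last substitution to be the crux and the only nontrivial part of the argument. It can be justified either by adopting the standard mean-field assumption that the per-slot contender count equals its expectation, or, more quantitatively, by noting that for a large device population $A_i$ concentrates sharply about $\mathbb{E}[A_i]$. A second-order Taylor expansion of $g(a) = a(1-1/K)^{a-1}$ about $a = \mathbb{E}[A_i]$ shows that the neglected correction is of order $g''(\mathbb{E}[A_i])\,\mathrm{Var}(A_i)$, so the approximation is tight whenever the relative variance of $A_i$ is small; dropping this correction is precisely what the subsequent agreement with simulation is meant to validate.
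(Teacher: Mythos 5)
Your derivation of the exact conditional identity matches the paper's: the paper also introduces indicators (indexed by preamble rather than by device, i.e.\ $S_i=\sum_{k=1}^{K}X_k$ with $P(X_k=1\mid A_i)=\binom{A_i}{1}\frac{1}{K}(1-\frac{1}{K})^{A_i-1}$, which is exactly the cross-check you mention), obtains $\mathbb{E}[S_i\mid A_i]=A_i(1-\frac{1}{K})^{A_i-1}$, and reduces the problem to evaluating $\sum_{A_i}\mathbb{E}[S_i\mid A_i]P(A_i)$. Where you genuinely diverge is in justifying the crux step of pushing the expectation inside the nonlinear function. You invoke a mean-field/concentration argument: a second-order Taylor expansion of $g(a)=a(1-\frac{1}{K})^{a-1}$ about $\mathbb{E}[A_i]$, with neglected correction of order $g''(\mathbb{E}[A_i])\,\mathrm{Var}(A_i)$. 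The paper instead argues globally: it observes (via Monte Carlo evidence on the distribution of $A_i$, even under the most adverse settings of large $P_{eab}$ and small $T_{eab}$) that $P(A_i)\,\mathbb{E}[S_i\mid A_i]$ is negligible beyond an inflection point $A_i^*\approx 100$ of the quasi-concave function $g$, applies Jensen's inequality on the concave range $[0,A_i^*]$ to get $\sum_{A_i}\mathbb{E}[S_i\mid A_i]P(A_i)\leq \mathbb{E}[A_i](1-\frac{1}{K})^{\mathbb{E}[A_i]-1}$, and so sandwiches $\mathbb{E}[C_i]$ between the claimed expression (a lower bound) and the trivial upper bound $\mathbb{E}[A_i]$, adopting the lower bound since the sandwich is tight at both extremes of $\mathbb{E}[A_i]$. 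The two routes buy different things: the paper's Jensen-plus-truncation argument establishes the \emph{direction} of the error (the formula systematically underestimates $\mathbb{E}[C_i]$) without needing any variance control, but leans on simulated distributions of $A_i$ to dismiss the tail; your Taylor argument gives a quantitative error \emph{magnitude} and recovers the same sign (since $g''<0$ on the relevant range), but rests on an unproven concentration claim for $A_i$ --- which, like the paper's tail truncation, is ultimately validated only by the simulation agreement. Neither justification is fully rigorous, the claim being an approximation by design, so your proposal is a legitimate and arguably more standard alternative to the paper's bound-and-sandwich argument.
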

\begin{proof}
The proof is shown in Appendix \ref{proof2}.
\end{proof}

It will be shown that the approximations made to obtain a closed form expression for $\mathbb{E}\left[C_i\right]$ are valid and yield accurate results. The evolution of the system is now completely expressed in terms of expected values given by \eqref{final-eqn-for-Ai} and \eqref{final-eqn-for-Ci}. These are iterative equations, \emph{i.e.}, starting with $\mathbb{E}\left[A_1\right] = P_{eab} \times \mathbb{E}\left[F_1\right]$ (since $P_{eab}$ is known and $\mathbb{E}\left[F_i\right]$ can be obtained from \eqref{access-intensity-eqn}) we can obtain $\mathbb{E}\left[C_1\right]$. Then we can find $\mathbb{E}\left[A_2\right]$ by using $\mathbb{E}\left[C_1\right]$ along with $\mathbb{E}\left[F_2\right]$ which in turn helps to obtain $\mathbb{E}\left[C_2\right]$ and so on.

\subsection{Analytical and Simulation Results}\label{simulation-results}
\begin{figure}
\centering
\mbox{\subfigure[EAB$\left(0.7,8\hspace{2pt}s\right)$]{\includegraphics[width=1.75in]{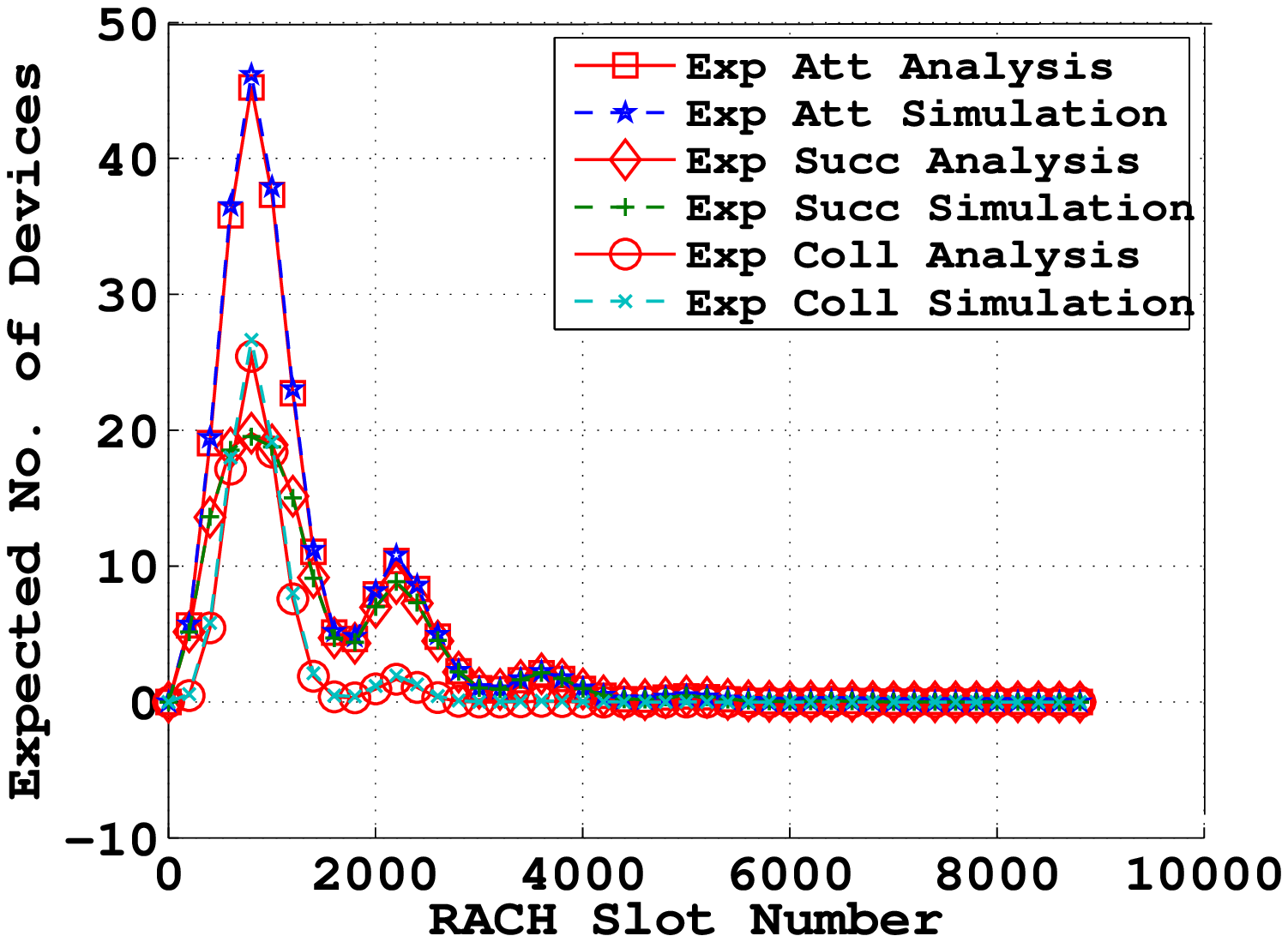}}\quad
\subfigure[EAB$\left(0.08,0.5\hspace{2pt}s\right)$]{\includegraphics[width=1.75in]{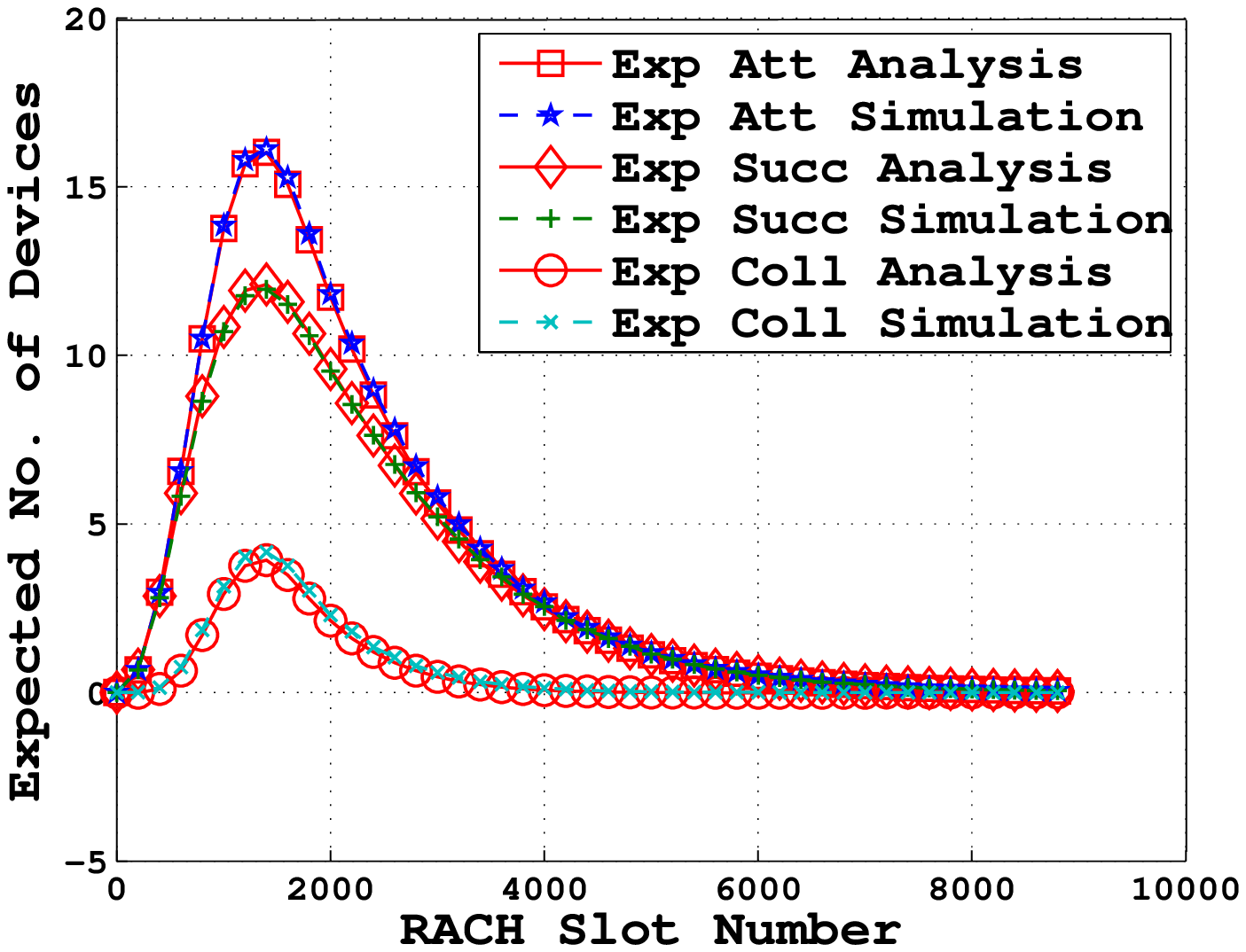} }}
\caption{Analysis and Simulation plots for expected attempts, successes and collisions V/S RACH slots}
\label{analysis-verification}
\end{figure}
We now verify the accuracy of our analytical expressions using Monte Carlo simulations that average over 5000 samples, \emph{i.e.,} 5000 access cycles. The simulation code is written in MATLAB. A total of $N=30000$ devices are present in the cell. The activation times of these devices is set to follow the beta distribution as specified by 3GPP \cite{3G6} and given by \eqref{pt}. Values of $\alpha$ = 3 and $\beta$ = 4 are considered as provided in \cite{3G6}. Also, $T_A$ is assumed to be 10 s. 

The RACH parameters are set according to \cite[Table 6.2.2.1.1]{3G6}. The backoff window is set to 20 ms that is uniformly distributed, so  $\mathbb{E}\left[T_B\right]=10$ ms. The RAR Response window size is 5 ms. Since we do not consider the \emph{capture effect}, any device which does not receive the RAR response in at most 5 ms duration after transmitting a preamble concludes its preamble has collided and decides to back off. So $\mathbb{E}\left[T_R\right] = 5$ ms\footnote{Note that if we had considered the \emph{capture effect}, the device would learn of its collision after the Contention Resolution window. Then $\mathbb{E}\left[T_R\right]$ would become 25 ms resulting in negligible changes to the results obtained without \emph{capture effect}.}. The contention resolution message window is uniformly distributed over 48 ms interval. Hence, $\mathbb{E}\left[T_{rach}\right]=24$ ms. PRACH configuration index is 6, \emph{i.e.,} $r_p=5$ ms. 

We run the simulation for EAB$\left(0.08,0.5\hspace{2pt}s\right)$ along with the three 3GPP proposed settings during standardization. We show the plots for the expected number of devices that attempt, succeed and collide in each RACH slot  for EAB$\left(0.7,8\hspace{2pt}s\right)$ and EAB$\left(0.08,0.5\hspace{2pt}s\right)$ as shown in figures \ref{analysis-verification}(a) and \ref{analysis-verification}(b). It can be seen that the analysis and the simulation plots match very well verifying the accuracy of our analytical model and validating the approximations that were done. We also tabulate the values for success probability $P_S$ and mean access delay $\mathbb{E}\left[T_{AD}\right]$ for successful RACH completion in Table \ref{table:3} for 
all the settings. The analysis and simulation values are seen to match well again.  
\begin{table}
\scriptsize
\centering
\caption{COMPARISON OF ANALYSIS AND SIMULATION RESULTS. SIMULATION RESULTS ARE SHOWN IN BRACKETS}
\begin{tabular}{ ||p{2.2cm}|p{2.0cm}|p{2.0cm}||}
 \hline
   & $P_S$ & $\mathbb{E}\left[T_{AD}\right]$
 \\
 \hline
  \textbf{EAB$\left(0.5,16\hspace{2pt}s\right)$}  & 0.84 (0.82) & 19.00 s (19.10 s)
  \\
 \hline
 \textbf{EAB$\left(0.7,8\hspace{2pt}s\right)$}  & 0.75 (0.72) & 4.56 s (4.70 s)
 \\
\hline
\textbf{EAB$\left(0.9,4\hspace{2pt}s\right)$} & 0.11 (0.13) & 4.25 s (3.55 s)
\\
\hline
\textbf{EAB$\left(0.08,0.5\hspace{2pt}s\right)$} & 0.84 (0.89) & 6.80 s (6.46 s)
\\
\hline
\end{tabular}
\label{table:3}
\end{table}
\vspace{-0.1in}
\section{Joint Optimization of EAB-BF and eNodeB Energy Consumption for Energy Efficient Optimal Performance}\label{optimization-section}
\label{power-minimize-section}
We now use our validated analytical model to investigate the optimal performance of EAB-BF jointly with respect to eNodeB energy consumption. The following notations are used. Let $\left(P_{eab}^*, T_{eab}^*\right)$ denote the optimal setting of EAB-BF that minimizes the mean eNodeB energy consumption per cycle  $\overline E_{c}^{cyl}$ under the constraint that the success probability $P_S$ is greater than some threshold $P_{min}$ and mean access delay $\mathbb{E}\left[T_{AD}\right]$ is less than some threshold $T_{max}$. We denote $\overline E_{c}^{cyl}$ achieved at $\left(P_{eab}^*, T_{eab}^*\right)$ as $\overline E_{c}^{min}$, $P_S$ as $P_{S}^{max}$ and $\mathbb{E}\left[T_{AD}\right]$ as $\overline T_{AD}^{min}$. This investigation helps us achieve the following objectives:

\emph{(i)} To show how our analytical expressions can be used to build search algorithms to obtain $\left(P_{eab}^*, T_{eab}^*\right)$ settings. These settings can be used by network operators to guarantee the desired QoS to IoT devices with minimal eNodeB energy consumption if EAB-BF is employed.

\emph{(ii)} To show that the optimal performance of EAB-BF is better than EAB-BB, through results obtained from one example search algorithm presented.

\emph{(iii)} To show that the settings that 3GPP considered for EAB-BF during standardization process not only result in sub-optimal performance but result in excessive eNodeB energy consumption. 

\emph{(iii)} To make corrections to sub-optimal 3GPP settings that result in substantial gains in the performance.

To achieve the above objectives, we first formulate  a constrained optimization problem. Next, we present an example search algorithm that makes use of our analytical expressions and some approximations to obtain the near-optimal settings which are denoted as $\left(\widehat P_{eab}, \widehat T_{eab}\right)$ to distinguish from the optimal $\left(P_{eab}^*, T_{eab}^*\right)$ settings.

\begin{table*}
\scriptsize
\centering
\caption{OPTIMIZING THE PERFORMANCE OF EAB-BF FOR VARIOUS LOWER BOUNDS ON SUCCESS PROBABILITY}
\begin{tabular}{ ||p{0.9cm}|p{1.2cm}|p{1.1cm}|p{1.1cm}|p{2.2cm}|||p{0.9cm}|p{1.2cm}|p{1.1cm}|p{1.1cm}|p{2.2cm}||  }
 \hline
   $P_{min}$ &  $\overline T_{AD}^{min}$ & $P_{S}^{max}$ & $\overline E_{c}^{min}$ & $\left(\widehat P_{eab}, \widehat T_{eab}\right)$ &  $P_{min}$ & $\overline T_{AD}^{min}$&  $P_{S}^{max}$ & $\overline E_{c}^{min}$& $\left(\widehat P_{eab}, \widehat T_{eab}\right)$ 
 \\
 \hline
 0.05&1.49 s&0.3384&0.45 KJ & $\left(0.17, 0.10\hspace{2pt}s\right)$&0.55&1.93 s& 0.5528&0.54 KJ& $\left(0.16,0.20\hspace{2pt}s\right)$
 \\
 \hline
0.10&1.49 s&0.3384& 0.45 KJ& $\left(0.17, 0.10\hspace{2pt}s\right)$&0.60&2.25 s& 0.6032& 0.60 KJ& $\left(0.13,0.20\hspace{2pt}s\right)$
 \\
 \hline
0.15&1.49 s&0.3384& 0.45 KJ & $\left(0.17, 0.10\hspace{2pt}s\right)$&0.65&2.63 s& 0.6525& 0.67 KJ&$\left(0.15,0.30\hspace{2pt}s\right)$
 \\
 \hline
0.20&1.49 s&0.3384&0.45 KJ& $\left(0.17, 0.10\hspace{2pt}s\right)$&0.70&3.29 s& 0.7047& 0.77 KJ& $\left(0.08,0.20\hspace{2pt}s\right)$
 \\
 \hline
0.25&1.49 s&0.3384& 0.45 KJ&$\left(0.17, 0.10\hspace{2pt}s\right)$&0.75&3.80 s& 0.7505& 0.92 KJ& $\left(0.22,0.80\hspace{2pt}s\right)$
 \\
 \hline
0.30&1.49 s&0.3384& 0.45 KJ&$\left(0.17, 0.10\hspace{2pt}s\right)$&0.80&4.89 s& 0.8011& 1.14 KJ& $\left(0.22,1.10\hspace{2pt}s\right)$
 \\
 \hline
0.35&1.58 s& 0.3648& 0.45 KJ&$\left(0.16,0.10\hspace{2pt}s\right)$&0.85&6.91 s& 0.8511& 1.49 KJ& $\left(0.12,0.80\hspace{2pt}s\right)$
 \\
 \hline
0.40&1.61 s& 0.4152&0.46 KJ& $\left(0.14,0.10\hspace{2pt}s\right)$&0.90&10.46 s& 0.9002&2.15 KJ& $\left(0.06,0.60\hspace{2pt}s\right)$
 \\
 \hline
0.45&1.64 s& 0.4642&0.48 KJ& $\left(0.12,0.10\hspace{2pt}s\right)$&0.95&19.14 s& 0.9516&3.48 KJ& $\left(0.09,1.80\hspace{2pt}s\right)$
\\
\hline
0.50&1.72 s& 0.5059&0.51 KJ& $\left(0.19,0.20\hspace{2pt}s\right)$&1.00&19.14 s& 0.9516&3.48 KJ& $\left(0.09,1.80\hspace{2pt}s\right)$
\\
\hline
\end{tabular}
\label{table:4}
\end{table*}

Even though the mean energy consumed per cycle $\overline E_{c}^{cyl}$ is a function of the variables $W$, $r_p$, $P_{eab}$ and $T_{eab}$, 
our interest currently lies only in the way $P_{eab}$ and $T_{eab}$ affect the performance of EAB-BF. Hence, we choose to minimize the objective $\overline E_{c}^{cyl}$ with respect to $P_{eab}$ and $T_{eab}$  under the constraint of a lower bound on the success probability $P_S$ and an upper bound on the mean access delay $\mathbb{E}\left[T_{AD}\right]$.  Hence, the optimization is formulated as follows:
\begin{equation}\label{opti-formulation}
\begin{aligned}
& \underset{P_{eab},T_{eab}}{\text{min}}
& & \overline E_{c}^{cyl}\\
& \text{s.t.} & &  P_S \geq P_{min}, \mathbb{E}\left[T_{AD}\right] \leq T_{max}\\ 
& & &P_{eab}, P_S,P_{min} \in \left[0,1\right]\\
& & & W,r_p,T_{eab} \in \mathbb{Z}^+,\\
\end{aligned}
\end{equation}where, $P_{min}$ is the lower bound on the success probability, $T_{max}$ the upper bound on the mean access delay and  $W$, $r_p$ are expressed in milliseconds and $T_{eab}$ in seconds. The objective $\overline E_{c}^{cyl}$ is given by \eqref{eqn-for-p-in-avg}.

To present an example search algorithm to obtain optimal settings, we make the following observations. Note that $\mathbb{E}\left[S_i\right] = \mathbb{E}\left[A_i\right] - \mathbb{E}\left[C_i\right]$ and therefore, the objective in \eqref{opti-formulation} is non linear in $P_{eab}$ and $T_{eab}$. Also, with $P_{eab} \in [0,1] \subseteq \mathbb{R}$ and $T_{eab} \in \mathbb{Z}^+$, this formulation belongs to the class of Mixed Integer Non Linear Programming problems (MINLP) which are in general known to be hard to solve theoretically \cite{minlp}. Usually, some form of single-tree and multi-tree search methods are applied for solving MINLP problems. For convex MINLP, hybrid methods that combine the strengths of single-tree and multi-tree search methods are used. But to classify the objective and/or the feasible region as convex in \eqref{opti-formulation}, $T_{eab}$  needs to be relaxed to a continuum-valued variable. But  that would make $\mathbb{E}\left[A_i\right]$ (on which $\mathbb{E}\left[S_i\right]$ 
depends) incomputable by 
\eqref{final-eqn-for-Ai} (since $T_{eab}$ appears in the discrete summation). Therefore, the objective and/or the feasible region cannot be approximated as convex \cite{boyd}. Moreover $\mathbb{E}\left[A_i\right]$,  on which the objective function depends, is multi-modal in nature as shown in Fig. \ref{ai-beh} for $i = 5000, 15000$ as $P_{eab}$ and $T_{eab}$ are varied. Even for a fixed $i$, the behavior of $\mathbb{E}\left[A_i\right]$ is not easy to characterize mathematically. The fact that $\mathbb{E}\left[C_i\right]$ is a 
non linear function of $\mathbb{E}\left[A_i\right]$ as seen in \eqref{final-eqn-for-Ci} adds to the difficulty.

For the class of nonconvex MINLP problems, branch-and-bound (BB) (also called  spatial BB (sBB)) is one of the well-known methods \cite{minlp}. But the BB algorithm requires \emph{(i)} a procedure to compute a lower bound on the optimal objective function value of a subproblem and \emph{(ii)} a procedure for partitioning the feasible set of a subproblem. But even if we sample $P_{eab}$ and $T_{eab}$ and attempt enumeration in our formulation, the requirements of the BB algorithm cannot be satisfied because of the way $\mathbb{E}\left[A_i\right]$ depends on $P_{eab}$ and $T_{eab}$.

Because of the reasons mentioned above, we present an algorithm that employs exhaustive search. To this end, we evaluate the objective at only finite sampled set of $P_{eab}$ and $T_{eab}$ values. We then search for the combination $\left(P_{eab},T_{eab}\right)$ that results in the minimum mean energy consumption $\overline E_{c}^{min}$ while satisfying the constraint of $P_{min}$ and $T_{max}$. The values of the objective obtained at each sampled $\left(P_{eab},T_{eab}\right)$ combination are accurate since they are computed using exact equations (\eqref{eqn-for-p-in-avg}, \eqref{final-eqn-for-Ai} and \eqref{final-eqn-for-Ci}) that have been derived. Since we have exact equations for $\overline E_{c}^{cyl}$ along with $P_S$ and $\mathbb{E}\left[T_{AD}\right]$ as derived in previous 
sections, the search is made simple and does 
not encounter any convergence issues that optimization algorithms usually face. Therefore, if the set of $\left(P_{eab},T_{eab}\right)$ combinations that have been sampled contains the minimizer of the objective, the solution is guaranteed to be optimal. On the other hand, if the set misses the minimizer of the objective function, then the solution obtained is near-optimal. This is only due to sampling at lesser resolution and not due to any fault in the method per se. Hence, the solution is at worst near-optimal and denoted by $\left(\widehat P_{eab}, \widehat T_{eab}\right)$. Note that at $\left(\widehat P_{eab}, \widehat T_{eab}\right)$, the success probability denoted by $P_{S}^{max}$ is the maximum that is possible and the mean delay denoted by $\overline T_{AD}^{min}$ is the minimum that is possible. 

Our example algorithm to search for $\left(\widehat P_{eab}, \widehat T_{eab}\right)$ that yields $\overline E_{c}^{min}$, $P_S^{max}$ and $\overline T_{AD}^{min}$ under the constraint of $P_{min}$ and $T_{max}$ is shown in Algorithm \ref{CHalgorithm}. The algorithm calculates $P_S$, $\mathbb{E}\left[T_{AD}\right]$ and $\overline E_{c}^{cyl}$ for each sampled pair $\left(P_{eab}, T_{eab}\right)$ in the \emph{for} loop between line 4 and 11. Within the \emph{for} loop, for each $\left(P_{eab}, T_{eab}\right)$, the \emph{while} loop from line 7 to 9 evaluates $\mathbb{E}\left[A_{i}\right]$, $\mathbb{E}\left[S_{i}\right]$, $\mathbb{E}\left[C_{i}\right]$, $\mathbb{E}\left[\varUpsilon_i\right]$ and $\overline E_{c}^{cyl}$ for each RACH slot $i$ using the closed form expressions obtained from our analysis. The \emph{Energy Consumption Minimizer} procedure then searches for the near-optimal pair $\left(\widehat P_{eab}, \widehat T_{eab}\right)$ that results in $\overline E_{c}^{min}$ for the constraints $P_{min}$ 
and $T_{max}$. The values of $P_S$ and $\mathbb{E}\left[T_{AD}\right]$ obtained at $\left(\widehat P_{eab}, \widehat T_{eab}\right)$ are then $P_{S}^{max} $ and $\overline T_{AD}^{min}$ respectively. 

\begin{algorithm}[h!]
\caption{To find $\overline E_{c}^{min}$, $\overline T_{AD}^{min}$, $P_{S}^{max}$, $\left(\widehat P_{eab}, \widehat T_{eab}\right)$}
\label{CHalgorithm}
\begin{algorithmic}[1]
\small
\State Initialize step sizes $\Delta P_{eab}$, $\Delta T_{eab}$, $\Delta P_{min}$. 
\State Initialize $N$, $r_p$, $W$, $K$, $\mathbb{E}\left[T_{rach}\right]$, $\mathbb{E}\left[T_{R}\right]$, $\mathbb{E}\left[T_{B}\right]$, $T_{max}$.
\Procedure{Performance\_Evaluator}{}
\For{$P_{eab}$ = 0 : $\Delta P_{eab}$ : 1}
\For{$T_{eab}$ = 0.1 : $\Delta T_{eab}$ : 20}
\State Initialize \scriptsize $\mathbb{E}\left[\varUpsilon_0\right],\overline E_{c}^{cyl}(P_{eab},T_{eab}) = 0$, $i = 1$, $q=\frac{T_{eab}}{r_p}$.
\small
\While{$\mathbb{E}\left[\varUpsilon_i\right] < N$}
\State Set $Q=\lfloor \frac{i}{q} \rfloor$ and evaluate:
\scriptsize
\begin{displaymath}
\begin{split}
 \hspace{56pt}&\mathbb{E}\left[A_{i}\right]=\sum_{j=0}^{Q} P_{eab}(1-P_{eab})^j\left[\mathbb{E}\left[F_{i-jq}\right] + \sum_{l=i-(jq+\frac{W}{r_p})}^{i-(jq+1)}\mathbb{E}\left[B_{l}\right]\right]\\
 &\mathbb{E}\left[C_i\right] =  \mathbb{E}\left[A_i\right]\left(1-\left(1-\frac{1}{K}\right)^{\mathbb{E}\left[A_i\right]-1}\right)\\
 &\mathbb{E}\left[S_i\right] = \mathbb{E}\left[A_i\right] - \mathbb{E}\left[C_i\right]\\
 &\mathbb{E}\left[\varUpsilon_i\right] = \mathbb{E}\left[\varUpsilon_{i-1}\right] + \mathbb{E}\left[S_i\right]\\
 &\overline E = 10^{-3}\sum_{j=1}^{\mathclap{\substack{min(N_{sf},W_{rar})}}}                 
        P_o + \frac{min(\mathbb{E}\left[S_i\right]-(j-1)N_{max}^{rar},N_{max}^{rar})56}{2D_sM_{cs}}\alpha P_{t}\\
 &\overline E_{c}^{cyl}(P_{eab},T_{eab}) = \overline E_{c}^{cyl}(P_{eab},T_{eab}) + \overline E\\
 &i = i+1
 \end{split}
\end{displaymath}
\small
\EndWhile
\scriptsize
\begin{displaymath}
\begin{split}
\hspace{40pt}&P_S(P_{eab},T_{eab}) = 1 - \frac{\sum\limits_{i:\mathbb{E}\left[\varUpsilon_i\right] < N} \mathbb{E}\left[C_i\right]}{\sum\limits_{i:\mathbb{E}\left[\varUpsilon_i\right] < N} \mathbb{E}\left[A_i\right]}; \hspace{5pt} \mathbb{E}\left[N_A\right] = \frac{1}{P_S}\\
&t\_v = \frac{(1-P_{eab})T_{eab}\mathbb{E}\left[N_A\right]}{P_{eab}} + \mathbb{E}\left[T_{rach}\right]\\
&\mathbb{E}\left[T_{AD}\right](P_{eab},T_{eab}) = t\_v + \left(\mathbb{E}\left[N_A\right]-1\right) \left(\mathbb{E}\left[T_R\right] + \mathbb{E}\left[T_B\right]\right)
\end{split}
\end{displaymath}
\small
\EndFor
\EndFor
\EndProcedure
\Procedure{Energy\_Consumption\_Minimizer}{}
\State Find $\overline E_{c}^{min}$ s.t $P_{S} \geq P_{min}, \hspace{5pt} \mathbb{E}\left[T_{AD}\right] \leq T_{max} $
\State Find $\left(\widehat P_{eab}, \widehat T_{eab}\right)$ for which $\overline E_{c}^{cyl} = \overline E_{c}^{min}$
\State Set $P_{S}^{max}=P_{S}$ and $\overline T_{AD}^{min} = \mathbb{E}\left[T_{AD}\right]$ at $\left(\widehat P_{eab}, \widehat T_{eab}\right)$
\EndProcedure
\end{algorithmic}
\end{algorithm}

\subsection{Comparison with EAB-BB}
We now compare the performance of EAB-BF obtained through our analysis with EAB-BB as provided in \cite{cheng3} for $N$=30000\footnote{We choose $N$=30000 since the case of 30000 devices with beta arrivals is considered the most severe, leading to RAN overload in a LTE-A cell.}. To do that, we implement our search algorithm in MATLAB. $P_{eab}$ is sampled in steps of 0.01 from 0 to 1, $T_{eab}$ is taken from 100 ms to 20 s in steps of 100 ms. The settings of other parameters are same as considered in Section \ref{simulation-results} for simulations that conform with 3GPP settings and the values used in \cite{cheng3}. $P_{min}$ is varied from 0 to 1 in steps of 0.01. $T_{max}$ is set to 50 seconds. We set $P_o = 170$ W, $P_t = 0.8$ W  and $\delta = 0.3$ as used in \cite{frenger}. We tabulate the results in Table \ref{table:4} for only a subset of $P_{min}$ values due to space constraints. The set of 3-tuples $\left(P_S^{max},\overline T_{AD}^{min},\overline E_{c}^{min}\right)$ obtained through this search 
provide the trade off in EAB-BF performance. Each 3-tuple indicates the near-optimal 
combination of $P_S$, $\mathbb{E}\left[T_{AD}\right]$ and $\overline E_{c}^{cyl}$ that can be achieved simultaneously when EAB-BF is employed. For comparison with EAB-BB,
 we consider only the set of near-optimal 2-tuples $\left(P_S^{max},\overline T_{AD}^{min}\right)$. We compare these results with the success probability and mean access delay pairs $\left(P_S,\overline D\right)$  that are obtainable through EAB-BB mechanism using various combinations of  transmission period of SIB14 ($T_S$) and the paging cycle ($T_P$), as shown through Fig. 6(a) and Fig. 6(c) in  \cite{cheng3}. 

For EAB-BF, from Table \ref{table:4}, it can be observed that the mean delay $\overline T_{AD}^{min}$ stays within 2 s till the success probability $P_{S}^{max}$ reaches 0.6  and remains within 5 s till $P_{S}^{max}$ reaches a value of 0.8. But for EAB-BB, the mean delay $\overline D$ increases almost monotonically to more than 5 s before the success probability $P_S$ reaches 0.5 and is almost 14 s at $P_S=0.9$ for EAB-BB as shown through Fig. 6(a) and Fig. 6(c) in \cite{cheng3}. Clearly, the mean access delay obtainable for EAB-BF is lesser than that of EAB-BB for the same success probability that can be achieved in both schemes. The mean access delay is only 10.46 s for the same success probability of 0.9 for EAB-BF proving its superiority over EAB-BB. 
\begin{figure*}
\centering
\mbox{\subfigure[Maximum Success Probability $P_{S}^{max}$]{\includegraphics[width=2.2in]{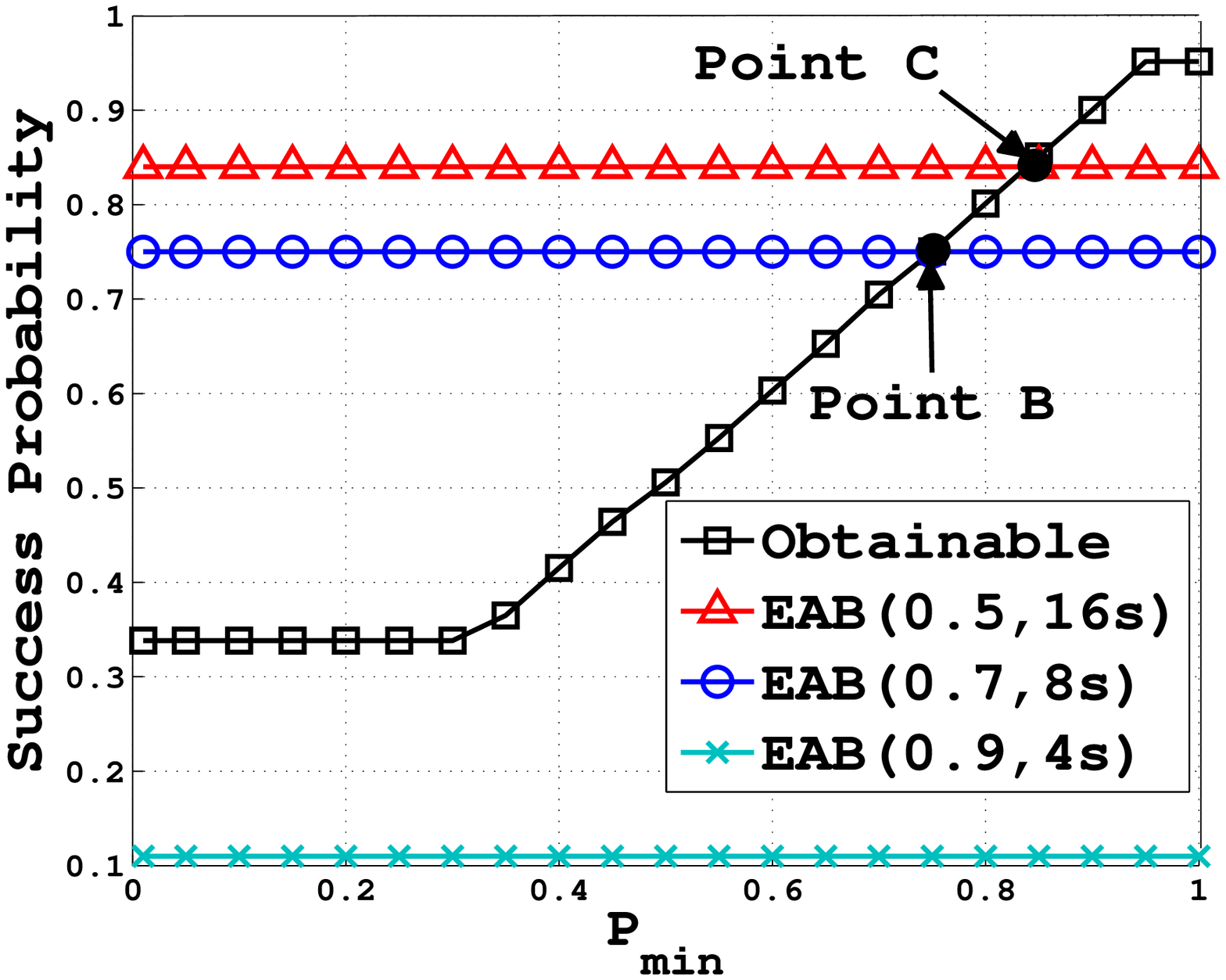}}\quad
\subfigure[Minimum Energy Consumption $\overline E_{c}^{min}$]{\includegraphics[width=2.2in]{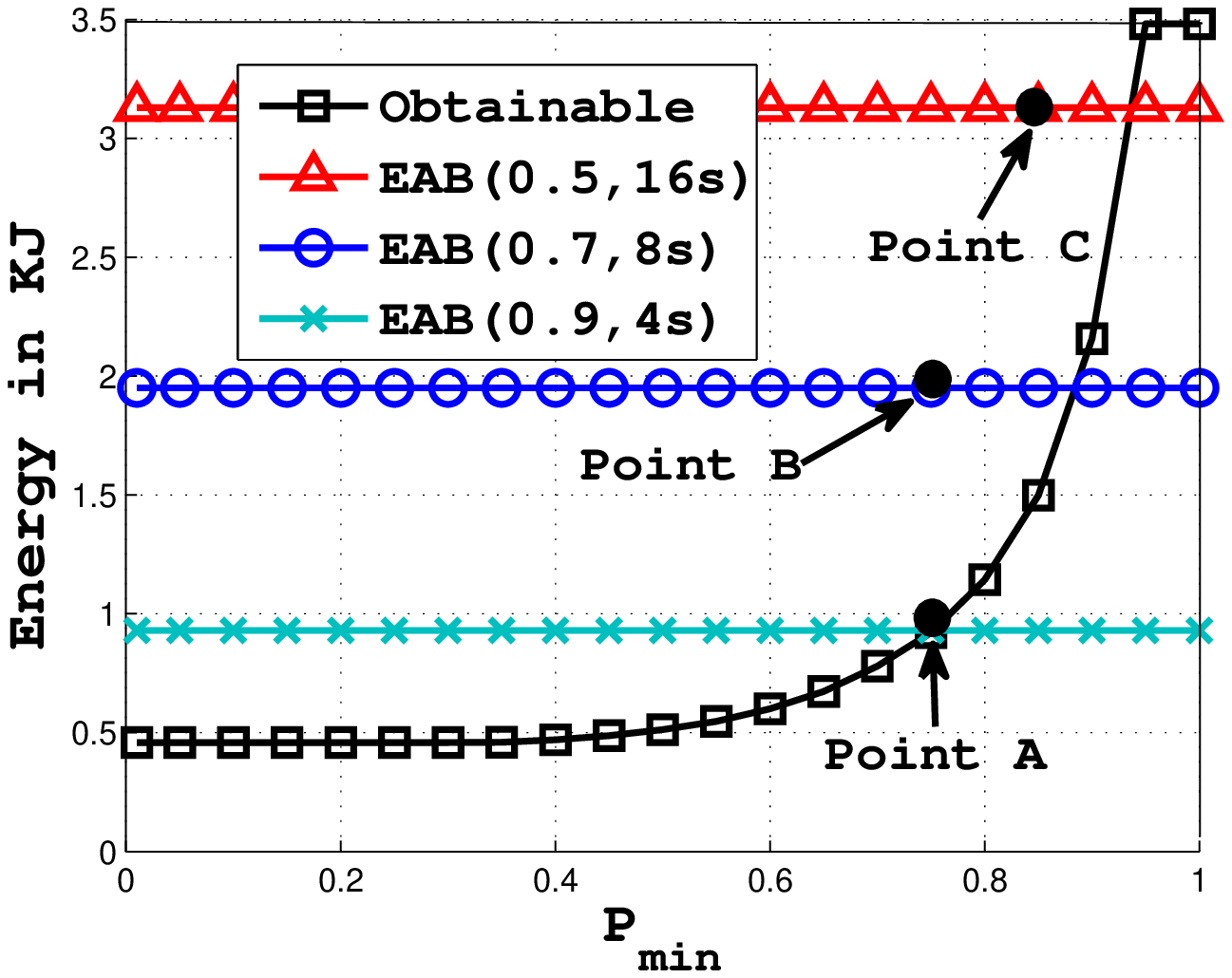}}\quad
\subfigure[Minimum Mean Access Delay $\overline T_{AD}^{min}$]{\includegraphics[width=2.2in]{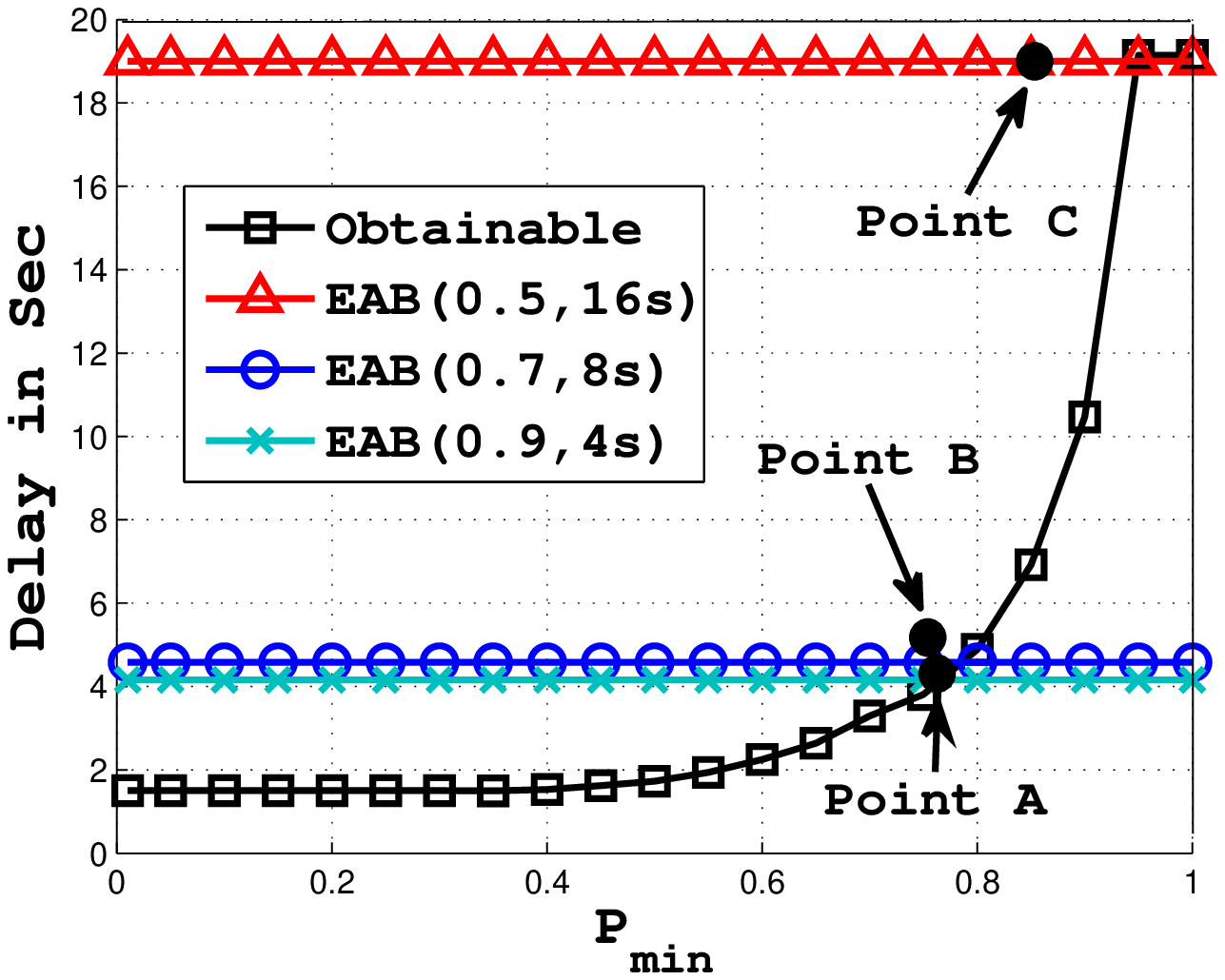}}}
\vspace{0pt}
\mbox{\subfigure[Gain in Success Probability]{\includegraphics[width=2.2in]{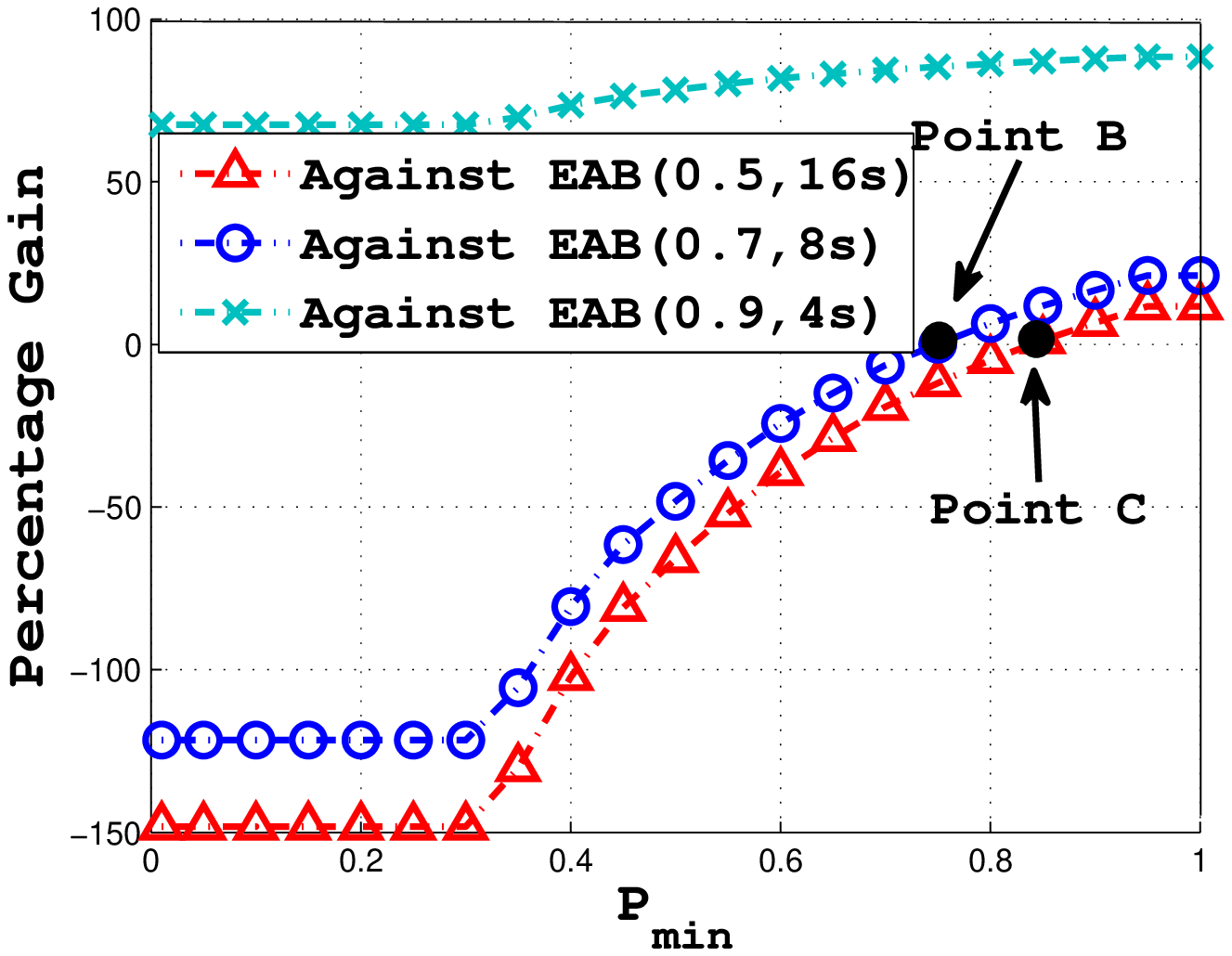}}\quad
\subfigure[Gain in Energy Consumption]{\includegraphics[width=2.2in]{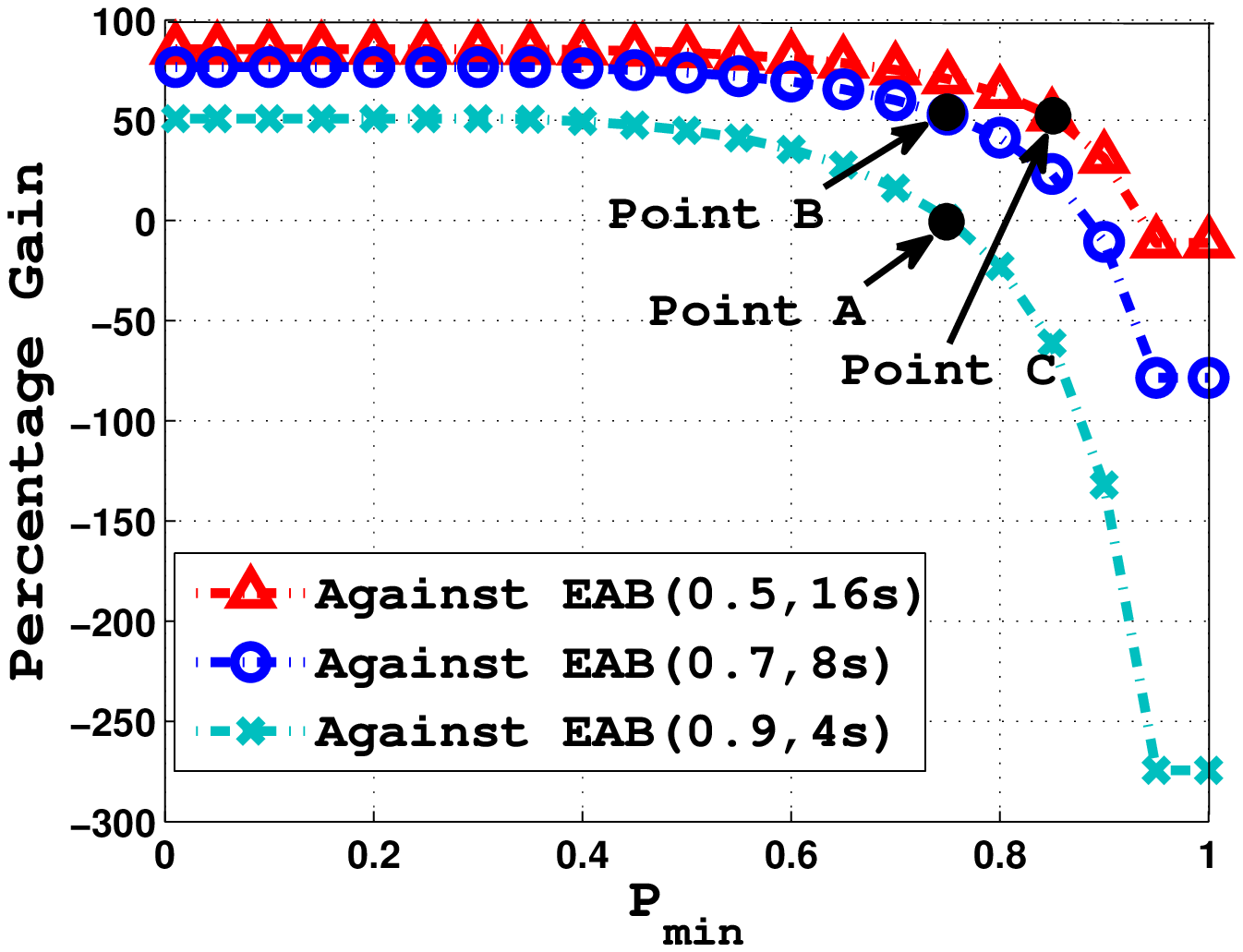}}\quad
\subfigure[Gain in Mean Access Delay]{\includegraphics[width=2.2in]{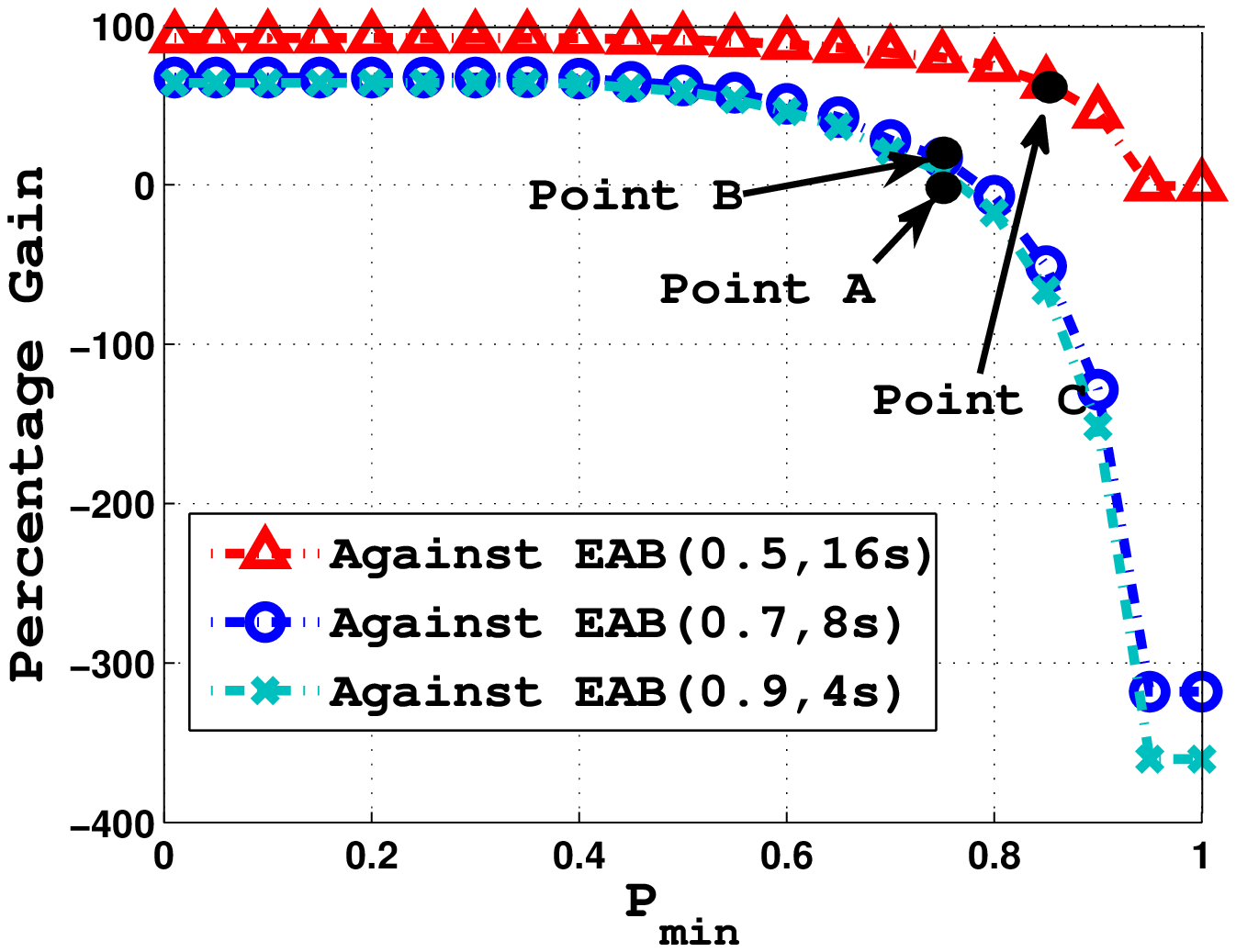}}}
\vspace*{-5mm}
\caption{Tradeoff and Gain Curves for maximum success probability $P_{S}^{max}$, minimum energy consumption $\overline E_{c}^{min}$, and minimum access delay $\overline T_{AD}^{min}$}
\label{optimize-together}
\end{figure*}
\begin{table}
\footnotesize
\centering
\footnotesize \caption{ENERGY CONSUMPTION PER CYCLE FOR 3GPP SETTINGS}
\begin{tabular}{ ||p{2.0cm}|p{5.0cm}||  }
 \hline
 \textbf{EAB Setting} & \textbf{Mean Energy Consumed Per cycle} \\
 \hline
EAB$\left(0.5,16\hspace{2pt}s\right)$ & \hspace{50pt}3.13 KJ
 \\
 \hline
EAB$\left(0.7,8\hspace{2pt}s\right)$  & \hspace{50pt}1.95 KJ
\\
\hline
EAB$\left(0.9,4\hspace{2pt}s\right)$ & \hspace{50pt}0.93 KJ
\\
\hline
\end{tabular}
\label{power-for-rar-table}
\end{table}

\subsection{Trade off Curves and Corrections to 3GPP Settings}
The plots for $P_S^{max}$, $\overline E_{c}^{min}$ and $\overline T_{AD}^{min}$ obtainable from EAB-BF (as obtained by our analysis in Table \ref{table:4}) against the set of lower bounds on $P_{min}$ are shown in Fig. \ref{optimize-together}(a), Fig. \ref{optimize-together}(b) and Fig. \ref{optimize-together}(c) respectively. These figures also show the performances that the three 3GPP settings (see tables \ref{table:3} and \ref{power-for-rar-table} for the 3GPP performances) can provide. Note that the performances of 3GPP settings are shown as straight lines to indicate that their performance do not depend on the desired $P_{min}$. Figures \ref{optimize-together}(d), \ref{optimize-together}(e) and \ref{optimize-together}(f) show the gains in success probability, energy consumption and mean access delay, respectively, that the near-optimal settings can provide against these three 3GPP settings.
It can be seen from these figures that the energy consumption and the access delay can be reduced by upto 50\% with more than 50\% gain in success probability compared to EAB$\left(0.9,4\hspace{2pt}s\right)$. In fact, all the obtainable near-optimal 3-tuples always perform better than  EAB$\left(0.9,4\hspace{2pt}s\right)$ till $P_{min}$ touches about 0.7 indicated by Point A in figures \ref{optimize-together}(b), \ref{optimize-together}(c), \ref{optimize-together}(e) and \ref{optimize-together}(f). Beyond Point A, EAB$\left(0.9,4\hspace{2pt}s\right)$ starts performing better than any optimal 3-tuples and the other two 3GPP settings in terms of energy consumption and access delay, but its success probability is always the worst. A reduction of 50\%  in energy consumption and around 20\% in access delay against EAB$\left(0.7,8\hspace{2pt}s\right)$ can be obtained without degrading the success probability indicated by Point B in all plots in Fig. \ref{optimize-together}. Similarly, a reduction of 50\%  in 
energy consumption and more than 50\% in access delay against EAB$\left(0.5,16\hspace{2pt}s\right)$ can be obtained without degrading the success probability indicated by Point C in all plots in Fig. \ref{optimize-together}. Hence, the 3GPP settings can be corrected by using the near-optimal barring parameter pairs $\left(\widehat P_{eab}, \widehat T_{eab}\right)$ that provide these gains and can be obtained from Table \ref{table:4}.

\section{Conclusion}\label{conclusion-section}
3GPP proposed Extended Access Barring (EAB) as the baseline solution to mitigate the RAN overload due to synchronized Random Access Channel (RACH) attempts by IoT devices in LTE-A. It was suggested to announce the EAB information either through a barring factor (EAB-BF) or a barring bitmap (EAB-BB). EAB-BB was adopted by 3GPP. 

In this work, we developed a novel analytical model to obtain the performance metrics of EAB-BF. Our analysis results were validated through simulations. Furthermore, we also developed the eNodeB energy consumption model to serve the IoT RACH requests in a LTE-A cell. Our analytical expressions along with the energy consumption model of the eNodeB help to build search algorithms to obtain EAB-BF settings that can simultaneously minimize eNodeB energy consumption, maximize success probability and minimize mean access delay for IoT devices. From the results obtained through an example search algorithm, we then showed that the optimal performance of EAB-BF is better than that of EAB-BB. Furthermore, we also showed the three 3GPP-proposed settings that were considered during standardization for EAB-BF provide sub-optimal QoS to devices and also result in excessive eNodeB energy 
consumption. 
We then showed how the 3GPP settings can be corrected that could lead to significant gains in performance of barring factor enabled EAB.  
\appendices
\section{Proof of Claim 1}\label{proof1}
 We can write the following using the total expectation theorem \cite{david-williams} , 
 \begin{equation}
  \mathbb{E}[A_i] = \sum_{\forall n}^{} \mathbb{E}[A_i|\varGamma_i = n]P(\varGamma_i=n).
 \end{equation}
 \vspace{-0.1mm}
 Since the devices try to clear the EAB test independently, we have,
\begin{equation}
\label{elements-of-T}
 P(A_i=m|\varGamma_i=n) = {n \choose m} P_{eab}^m \left(1-P_{eab}\right)^{n-m}.
\end{equation}
 Hence, we have, $\mathbb{E}[A_i|\varGamma_i = n] = n P_{eab}$.
Therefore,
\begin{equation} \label{eqn-for-Ai}
\mathbb{E}[A_i] = \sum_{\forall n}^{} n P_{eab}P(\varGamma_i=n) = P_{eab}\mathbb{E}[\varGamma_i].\\
\end{equation}
To find $\mathbb{E}[\varGamma_i]$,  we note that the total number of arrivals in slot $i$ is given by:
\begin{equation}
\label{eqn-for-vargammai}
 \varGamma_i =  F_i + \sum_{l=i-\frac{W}{r_p}}^{i-1} B_{l,i} + \varGamma^{'}_{i-\frac{T_{eab}}{r_p}},\hspace{5pt} i>0; \hspace{5pt} F_i=0, i>M,
\end{equation}where, $\varGamma^{'}_{i-\frac{T_{eab}}{r_p}}$ represents the fraction of devices out of $\varGamma_{i-\frac{T_{eab}}{r_p}}$ in slot $i-\frac{T_{eab}}{r_p}$ that could not pass the EAB test, backed off for time $T_{eab}$ and returned in the current slot $i$. The term $B_{l,i}$ represents the fraction of devices out of $C_l$ that arrive in slot $i$ after colliding in slot $l$. Since a device that collides backs off randomly over a window of length $W$ and the RACH periodicity is $r_p$, only those devices whose preambles collided within the previous $\frac{W}{r_p}$ slots can possibly re-attempt in the current slot $i$.

We now try to express $\varGamma_i$ only in terms of $F_i$ and $B_{l,i}$. To do that, we use the notation $q \triangleq \frac{T_{eab}}{r_p}$ and let {\small$i \in \left[Qq+1,(Q+1)q\right]$; $Q \in \mathbb{Z}^+$}. Then, \eqref{eqn-for-vargammai} can be written as:
\begin{equation}\label{eqn-for-vargammai-short}
 \varGamma_i =  F_i + \sum_{l=i-\frac{W}{r_p}}^{i-1} B_{l,i} + \varGamma^{'}_{i-q}.
\end{equation}
Now, we can write an expression for $\varGamma_{i-q}$,  by replacing $i$ with $i-q$ in \eqref{eqn-for-vargammai-short} to get,
\begin{equation}\label{eqn-for-vargammai-q}
  \varGamma_{i-q} = F_{i-q} + \sum_{l=(i-q)-\frac{W}{r_p}}^{(i-q)-1} B_{l,i-q} + \varGamma^{'}_{i-2q}. 
\end{equation}
Hence, we can now plug in $\varGamma_{i-q}$ from \eqref{eqn-for-vargammai-q} into \eqref{eqn-for-vargammai-short} to get,
\begin{equation}
  \begin{split}
  \varGamma_i &= F_i + \sum_{l=i-\frac{W}{r_p}}^{i-1} B_{l,i} + \left( F_{i-q} + \sum_{l=(i-q)-\frac{W}{r_p}}^{(i-q)-1} B_{l,i-q} + \varGamma^{'}_{i-2q}\right)^{'}\\
  &= F_i + \sum_{l=i-\frac{W}{r_p}}^{i-1} B_{l,i} + \left( F_{i-q} + \sum_{l=(i-q)-\frac{W}{r_p}}^{(i-q)-1} B_{l,i-q}\right)^{'} + \varGamma^{''}_{i-2q}.
    \end{split}
\end{equation}

Repeating this procedure recursively to substitute for $\varGamma_{i-2q}, \varGamma_{i-3q},....,\varGamma_{i-Qq}$ and noting that $\varGamma_{i-(Q+1)q}=0$, we get,
\begin{equation}
  \begin{split}
  \varGamma_i &= \left(F_i + \sum_{l=i-\frac{W}{r_p}}^{i-1} B_{l,i}\right) + \left(F_{i-q} + \sum_{l=(i-q)-\frac{W}{r_p}}^{(i-q)-1} B_{l,i-q}\right)^{'}\\
  &+\left(F_{i-2q} + \sum_{l=(i-2q)-\frac{W}{r_p}}^{(i-2q)-1} B_{l,i-2q}\right)^{''}+........\\
  &+\left(F_{i-Qq} + \sum_{l=(i-Qq)-\frac{W}{r_p}}^{(i-Qq)-1} B_{l,i-Qq}\right)^{\overbrace{'''''}^{\text{Q times}}},
  \end{split}
\end{equation}where, $(X)^{\overbrace{'''''}^{\text{n times}}}$ denotes the fraction of devices out of $X$ that could not pass the EAB test in $n$ successive attempts. 

Since the devices try to pass the EAB test independently and also each attempt of a device itself is independent of its previous attempts, after some rearrangement of terms and denoting $(X)^{\overbrace{'''''}^{\text{n times}}}$ as $(X)^{n'}$, we can express $\varGamma_i$ as:
\vspace{-0.05in}
\begin{equation}
  \begin{split}
  \varGamma_i &= F_i + F_{i-q}^{1'} + F_{i-2q}^{2'} +... + F_{i-Qq}^{Q'} + \sum_{l=i-\frac{W}{r_p}}^{i-1} B_{l,i} \\
  & + \sum_{l=(i-q)-\frac{W}{r_p}}^{(i-q)-1} B_{l,i-q}^{1'} + \sum_{l=(i-2q)-\frac{W}{r_p}}^{(i-2q)-1} B_{l,i-2q}^{2'}\\
  &+...+ \sum_{l=(i-Qq)-\frac{W}{r_p}}^{(i-Qq)-1} B_{l,i-Qq}^{Q'}.  
  \end{split}
\end{equation}
which can then be succinctly expressed as:
\begin{equation}
  \varGamma_i = \sum_{j=0}^{Q} \left[ F_{i-jq}^{j'} + \sum_{l=i-\left(jq+\frac{W}{r_p}\right) }^{ i-\left(jq+1\right)}B_{l,i-jq}^{j'}\right].
\end{equation}

Hence,
\begin{equation}\label{eqn-for-exp-of-vargammi}
\begin{split}
 \mathbb{E}\left[\varGamma_i\right] &= \mathbb{E} \sum_{j=0}^{Q} \left[ F_{i-jq}^{j'} + \sum_{l=i-\left(jq+\frac{W}{r_p}\right) }^{ i-\left(jq+1\right)}B_{l,i-jq}^{j'}\right]\\
 & \overset{(a)}{=}\sum_{j=0}^{Q} \left[ \mathbb{E}\left[F_{i-jq}^{j'}\right] + \sum_{l=i-\left(jq+\frac{W}{r_p}\right) }^{ i-\left(jq+1\right)}\mathbb{E}\left[B_{l,i-jq}^{j'}\right]\right],
\end{split}
\end{equation}where, $(a)$ follows because of linearity of the expectation operator \cite{david-williams}. Now, we can write ,
\begin{equation}
\label{eqn-for-exp-of-f}
 \mathbb{E}\left[F_{i-jq}^{j'}\right] = \sum_{F_{i-jq}}^{} \mathbb{E}\left[F_{i-jq}^{j'}|F_{i-jq}\right]P(F_{i-jq}).
\end{equation}

Using the law of iterated expectation (also called \emph{Tower Property of Conditional Expectation}) \cite{david-williams}, we can now write,

\begin{equation}
\begin{split}
\label{iterated-law}
 \mathbb{E}\left[F_{i-jq}^{j'}|F_{i-jq}\right] &\overset{(a)}{=} \mathbb{E}\left(\mathbb{E}\left[F_{i-jq}^{j'}|F_{i-jq}^{(j-1)'}\right]|F_{i-jq}\right)\\
 &\overset{(b)}{=} \mathbb{E}\left(\left(1-P_{eab}\right)F_{i-jq}^{(j-1)'}|F_{i-jq}\right)\\
 &\overset{(c)}{=} \left(1-P_{eab}\right)\mathbb{E}\left( F_{i-jq}^{(j-1)'}|F_{i-jq}\right)\\
 &\overset{(d)}{=} \left(1-P_{eab}\right)^{j-1}\mathbb{E}\left( F_{i-jq}^{'}|F_{i-jq}\right)\\
 &=\left(1-P_{eab}\right)^{j}F_{i-jq}.
 \end{split}
\end{equation}
Here, $(a)$ follows because $F_{i-jq}^{j'}$ depends on $F_{i-jq}^{(j-1)'}$ and is binomially distributed. $(b)$ follows because a device fails to clear the EAB test with probability $1-P_{eab}$ and the devices attempt to clear the EAB test independently. $(c)$ follows because expectation is a linear operator and $(d)$ follows from repeated application of law of iterated expectation.
From \eqref{eqn-for-exp-of-f}, it then follows that,
\begin{equation}
\label{final-eqn-for-exp-of-fjbar}
 \mathbb{E}\left[F_{i-jq}^{j'}\right] = \left(1-P_{eab}\right)^{j}\mathbb{E}\left[F_{i-jq}\right].
\end{equation}
We can similarly write,
\begin{equation}
 \mathbb{E}\left[B_{l,i-jq}^{j'}\right] = \sum_{B_{l,i-jq}}^{} \mathbb{E}\left[B_{l,i-jq}^{j'}|B_{l,i-jq}\right]P(B_{l,i-jq}).
\end{equation}
Following the same arguments as in \eqref{iterated-law}, we can write,
\begin{equation}
\label{eqn-for-exp-of-clbarjdash}
 \mathbb{E}\left[B_{l,i-jq}^{j'}\right] = \left(1-P_{eab}\right)^{j}\mathbb{E}\left[B_{l,i-jq}\right].
\end{equation}
Now, we can write,
\begin{equation}
\label{eqn-for-exp-of-clbar}
 \mathbb{E}\left[B_{l,i-jq}\right] = \sum_{\forall C_l: (i-jq) - l \leq \frac{W}{r_p}}^{} \mathbb{E}\left[B_{l,i-jq}|C_{l}\right]P(C_{l}).
\end{equation}
Since backoff time of the devices that collide is unformly distributed over the interval $\left[0,W-1\right]$, we have,
\begin{equation}
 \mathbb{E}\left[B_{l,i-jq}|C_{l}\right] = \frac{r_p}{W} C_l, \hspace{3pt} (i-jq) - l \leq \frac{W}{r_p}.
\end{equation}
Plugging this result in \eqref{eqn-for-exp-of-clbar} gives,
\begin{equation}
\label{final-eqn-for-exp-of-clbar}
 \mathbb{E}\left[B_{l,i-jq}\right] = \frac{r_p}{W} \mathbb{E}\left[C_l\right].
\end{equation}

Plugging \eqref{final-eqn-for-exp-of-clbar} into \eqref{eqn-for-exp-of-clbarjdash}, we get,
\begin{equation}
\label{final-eqn-for-exp-of-clbarjdash}
 \mathbb{E}\left[B_{l,i-jq}^{j'}\right] = \left(1-P_{eab}\right)^{j}\frac{r_p}{W} \mathbb{E}\left[C_l\right].
\end{equation}
From \eqref{eqn-for-Ai},  \eqref{eqn-for-exp-of-vargammi}, \eqref{final-eqn-for-exp-of-fjbar} and \eqref{final-eqn-for-exp-of-clbarjdash}, the result follows.

\section{Proof of Claim 2}\label{proof2}
We define an indicator random variable $X_k$ to indicate a preamble's success. $X_k = 1$ if the preamble $k, 1\leq k\leq K$, is chosen by only one among $A_i$ devices that attempt, else its value is 0. Hence, if $S_i$ denotes the number of successful preambles, then we can write,
\begin{equation}
\label{eqn-for-successes}
 S_i = \sum_{k=1}^{K}X_k \Rightarrow \mathbb{E}\left[S_i\right] = \sum_{k=1}^{K}\mathbb{E}[X_k] = K\mathbb{E}[X_k]
\end{equation}
The last equality holds since the devices choose the preambles independently. Now, the probability that the preamble $k$ is chosen by only one device given $A_i$ attempt is given by,
\begin{equation}
 P\left(X_k=1|A_i\right) = {A_i\choose 1}\frac{1}{K}\left(1-\frac{1}{K}\right)^{A_i-1},
\end{equation}since, each device chooses from the set of preambles with uniform distribution. Hence, 
\begin{equation}
\mathbb{E}\left[X_k|A_i\right] = {A_i\choose 1}\frac{1}{K}\left(1-\frac{1}{K}\right)^{A_i-1}.
\end{equation}
Therefore, from \eqref{eqn-for-successes}, we have, 
\begin{equation}
\label{eqn-for-conditional-exp-successes}
 \mathbb{E}\left[S_i|A_i\right] = K\mathbb{E}\left[X_k|A_i\right] = {A_i\choose 1}\left(1-\frac{1}{K}\right)^{A_i-1}.
\end{equation}
Also,
\begin{equation}
 \mathbb{E}\left[C_i|A_i\right] = A_i - \mathbb{E}\left[S_i|A_i\right].
\end{equation}

By applying the total expectation theorem, we can write,
\begin{equation}\label{eqn-for-Ci}
 \begin{split}
 \mathbb{E}\left[C_i\right] & = \sum_{\forall A_i}^{} \mathbb{E}\left[C_i|A_i\right] P\left(A_i\right)\\
 &= \mathbb{E}\left[A_i\right] - \sum_{\forall A_i}^{} \mathbb{E}\left[S_i|A_i\right]P\left(A_i\right).
 \end{split}
\end{equation}

Since the second term in \eqref{eqn-for-Ci} does not yield a closed form expression, we make some key observations to find a close approximation for it. Recall that,
\begin{displaymath}
 P\left(A_i = m\right) = \sum_{n=0}^{N} P(A_i = m|\varGamma_i=n).P(\varGamma_i=n).
\end{displaymath}
Therefore, only when $P_{eab}$ is high (see \eqref{elements-of-T}) and $P(\varGamma_i = n)$ is high for large $n$ (which is possible only with large $P_{eab}$ and small $T_{eab}$), $P\left(A_i\right)$ will exist for large $A_i$.  However, for a large $A_i$, $\mathbb{E}\left[S_i|A_i\right] \rightarrow 0$ (see Fig. \ref{observations-for-Ci}(a)) making the system unstable (since RACH procedure in LTE-A can be modeled as multi-channel Slotted ALOHA) \cite{osti,bertsekas} and the product 
$\mathbb{E}\left[S_i|A_i\right]P\left(A_i\right)$ negligible.
\begin{figure}
\centering
\mbox{\subfigure[EAB$\left(0.9,4s\right)$]{\includegraphics[width=1.75in]{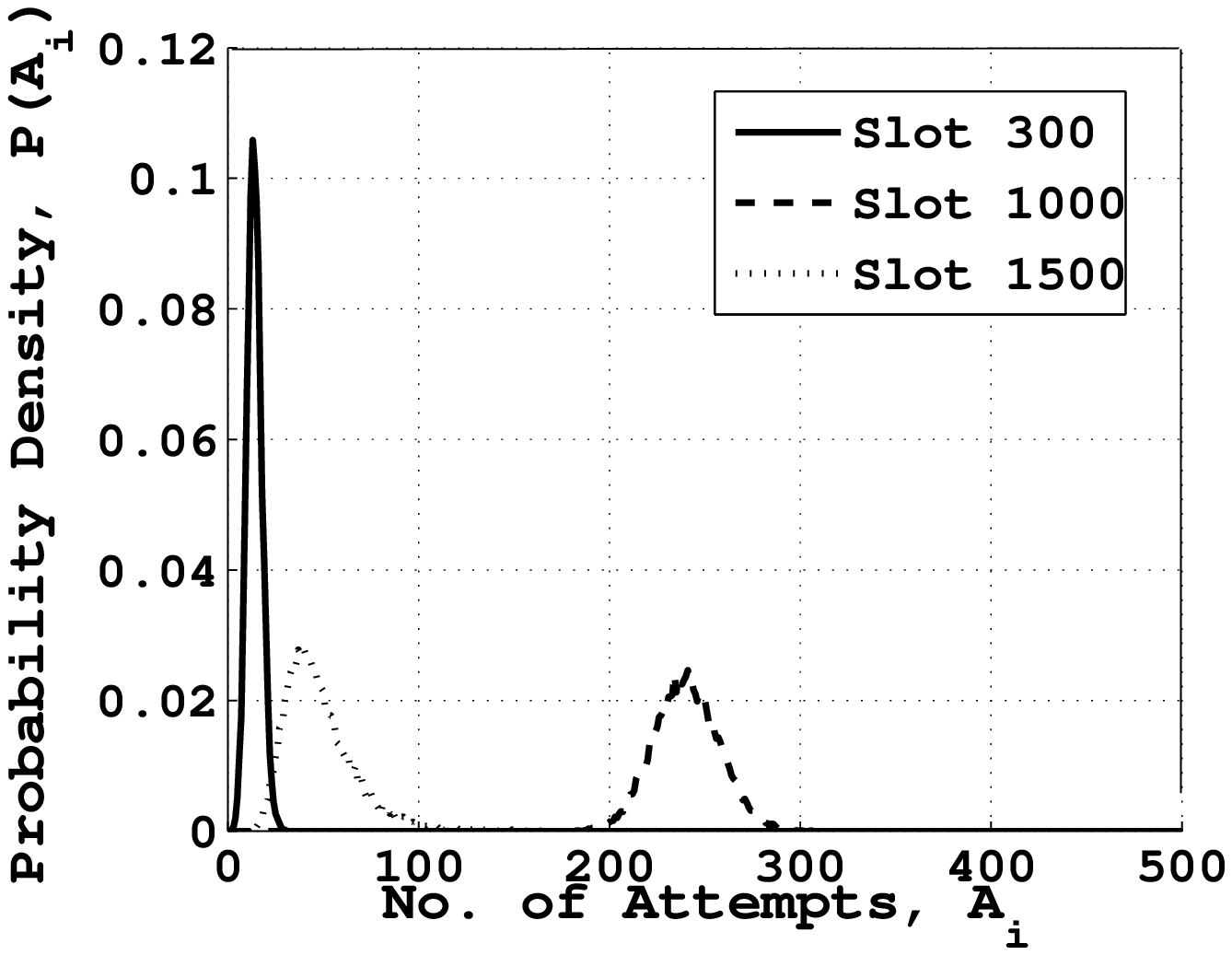}}\quad
\subfigure[EAB$\left(0.9,2s\right)$]{\includegraphics[width=1.75in]{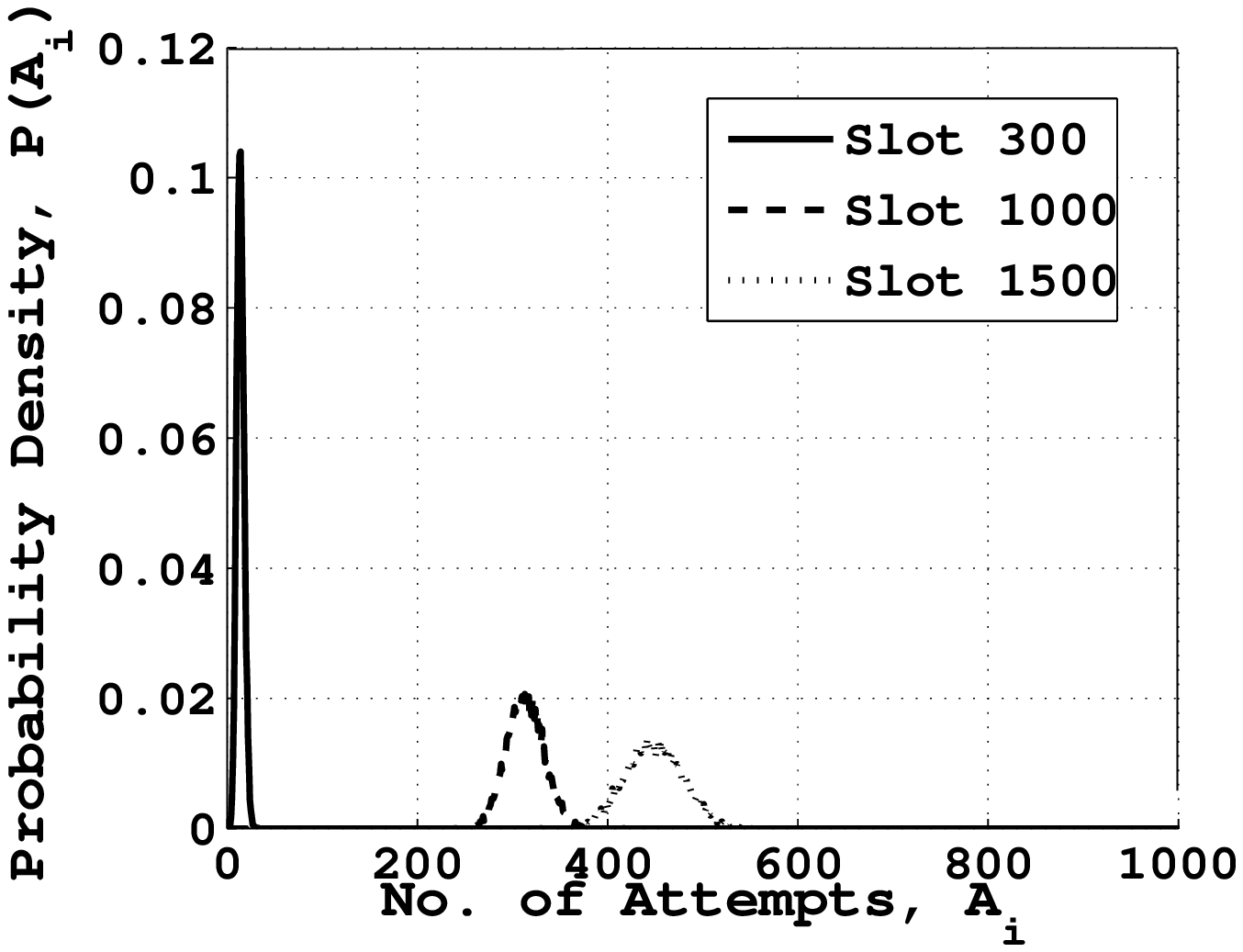} }}
\vspace*{-5mm}
\caption{Probability Distribution for $A_i$ for large $P_{eab}$ and small $T_{eab}$} 
\label{attempts-dist}
\end{figure}
Even though we do not have an exact form for $P\left(A_i\right)$, we run Monte Carlo simulations (25000 iterations) for settings shown in Table \ref{table:1} and plot $P\left(A_i\right)$ along Y-axis with $A_i$ along X-axis for EAB$\left(0.9,4\hspace{2pt}s\right)$ and EAB$\left(0.9,2\hspace{2pt}s\right)$ as shown in Fig. \ref{attempts-dist}. We plot $P\left(A_i\right)$ for slots 300,1000 and 1500 so that we can make observations at the beginning, during and after the activation interval of the devices. Clearly, even for large $P_{eab}$ and small $T_{eab}$ as shown in Fig. \ref{attempts-dist}(a) and \ref{attempts-dist}(b), $P\left(A_i\right)$ has negligible values beyond $A_i=100$. Additionally, $\mathbb{E}\left[S_i|A_i\right]$ falls off to small values beyond $A_i=100$ (see Fig. \ref{observations-for-Ci}(a)) to make their product negligible. It would then be reasonable to assume that for any combinations of smaller $P_{eab}$ and/or larger $T_{eab}$, the product of $P\left(A_i\right)$ and $\mathbb{E}\left[S_
i|A_i\right]$ would be 
negligible beyond $A_i=100$.
\begin{figure}
\centering
\mbox{\subfigure[Expected Number of Successes conditioned on $A_i$.]{\includegraphics[width=1.75in]{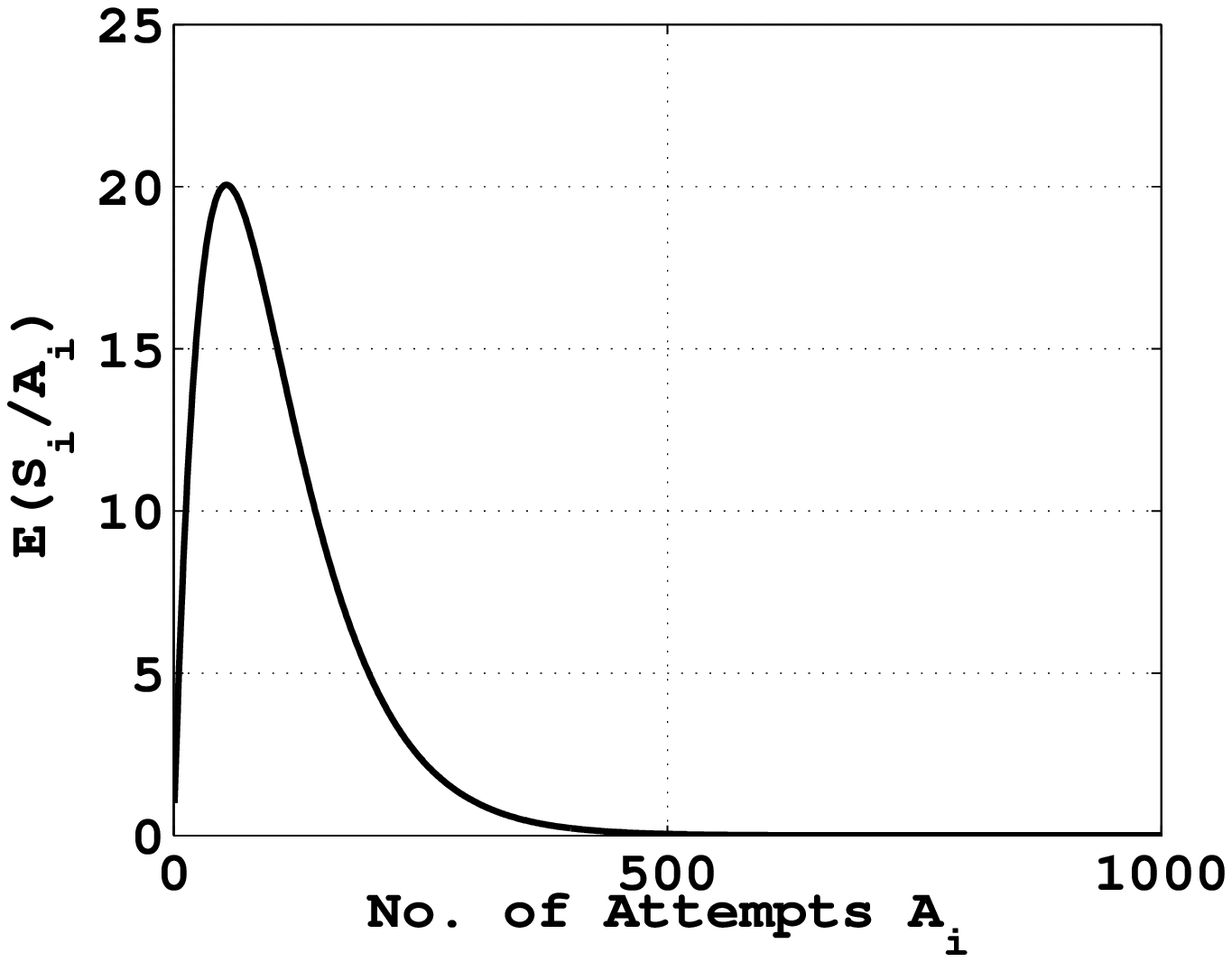}}\quad
\subfigure[Expectation of $C_i$ lies between the two curves shown.]{\includegraphics[width=1.75in]{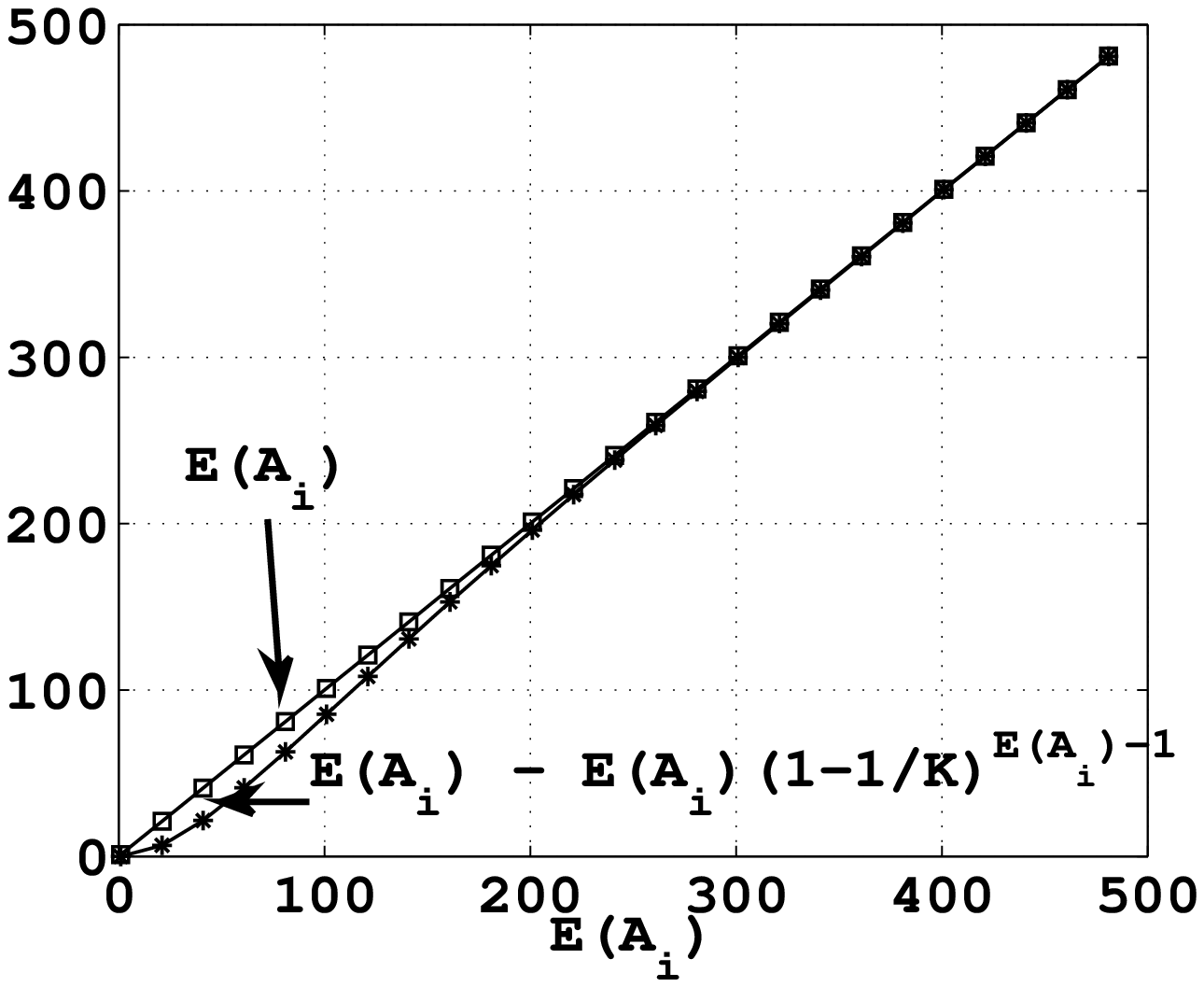} }}
\caption{Observations made for approximating $\mathbb{E}\left[C_i\right]$} 
\label{observations-for-Ci}
\end{figure}
We also observe that $\mathbb{E}\left[S_i|A_i\right]$ is quasi-concave since it is a function of only one variable and has only one peak as seen from its plot in Fig. \ref{observations-for-Ci}(a). Since $\mathbb{E}\left[S_i|A_i\right]$ is quasi-concave, we identify the range of $\left[0,A_{i}^{*}\right]$ over which the function is concave downward. The point $A_{i}^{*}$ can be found by equating the second differential of $\mathbb{E}\left[S_i|A_i\right]$ to zero. We identify that $A_{i}^{*}$ is close to 100 which is the value of $A_i$ beyond which we assumed the product  $P\left(A_i\right)\mathbb{E}\left[S_i|A_i\right]$ would be negligible. Hence, we can now write,
\begin{equation}
\small
 \begin{split}
 \sum_{\forall A_i}^{} \mathbb{E}\left[S_i|A_i\right]P\left(A_i\right) &=  \sum_{A_i \leq A_{i}{*}}^{} \mathbb{E}\left[S_i|A_i\right]P\left(A_i\right)+ \sum_{A_i > A_{i}^{*}}^{} \mathbb{E}\left[S_i|A_i\right]P\left(A_i\right)\\
 &\overset{(a)}{=}\sum_{A_i \leq A_{i}{*}}^{} \mathbb{E}\left[S_i|A_i\right]P\left(A_i\right)\\
 &\overset{(b)}{\leq} \mathbb{E}\left[A_i\right]\left(1-\frac{1}{K}\right)^{\mathbb{E}\left[A_i\right]-1},
 \end{split}
\end{equation}where, $(a)$ follows because of the above observations and $(b)$ follows because of \eqref{eqn-for-conditional-exp-successes} and Jensen's Inequality \cite{david-williams} since $\mathbb{E}\left[S_i|A_i\right]$ is concave over the range $\left[0,A_{i}^{*}\right]$.
Hence, from \eqref{eqn-for-Ci}, we then have,
\begin{equation}\label{final-bound-for-Ci}
  \mathbb{E}\left[C_i\right] \geq  \mathbb{E}\left[A_i\right] -  \mathbb{E}\left[A_i\right]\left(1-\frac{1}{K}\right)^{ \mathbb{E}\left[A_i\right]-1}.
\end{equation}

Even though the RHS of \eqref{final-bound-for-Ci} represents the lower bound for $ \mathbb{E}\left[C_i\right]$, clearly, $ \mathbb{E}\left[C_i\right]\leq  \mathbb{E}\left[A_i\right]$, since the expected number of collisions cannot exceed the expected number of attempts in a RACH slot. Therefore, 
\begin{equation}
\label{sandwich-Ci}
  \mathbb{E}\left[A_i\right] -  \mathbb{E}\left[A_i\right]\left(1-\frac{1}{K}\right)^{ \mathbb{E}\left[A_i\right]-1} \leq  \mathbb{E}\left[C_i\right] \leq  \mathbb{E}\left[A_i\right].
\end{equation}
Hence $\mathbb{E}\left[C_i\right]$ is sandwiched between the two curves as shown in Fig. \ref{observations-for-Ci}(b). Observe that the two curves are close to each other. Additionally, $\mathbb{E}\left[C_i\right]$ will be close to the lower bound when $\mathbb{E}\left[A_i\right]$ is low, and when $\mathbb{E}\left[A_i\right]$ is high $\mathbb{E}\left[A_i\right] -  \mathbb{E}\left[A_i\right]\left(1-\frac{1}{K}\right)^{ \mathbb{E}\left[A_i\right]-1} \rightarrow \mathbb{E}\left[A_i\right]$. Hence,  we can approximate $ \mathbb{E}\left[C_i\right]$ with the lower bound in \eqref{sandwich-Ci} to make the analysis tractable. Therefore, 
\begin{equation}
  \mathbb{E}\left[C_i\right] =  \mathbb{E}\left[A_i\right] -  \mathbb{E}\left[A_i\right]\left(1-\frac{1}{K}\right)^{ \mathbb{E}\left[A_i\right]-1}.
\end{equation}

%

%
%
%




\end{document}